\title{\huge \textbf{Confidence intervals for two-stage adaptive designs with subpopulation selection}}

\author{Enyu Li$^{1\ast}$, Nigel Stallard$^{1}$, Ekkehard Glimm$^{2}$, Dominic Magirr$^{2}$, Peter K. Kimani $^{1}$ \\[8pt]
\textit{$^1$Clinical Trials Unit, University of Warwick, Coventry, U.K.} \\
\textit{$^2$Advanced Methodology and Data Science, Novartis Pharma AG, Basel, Switzerland}
\\[8pt]
$^\ast$Email: \href{mailto:Enyu.Li@warwick.ac.uk}{Enyu.Li@warwick.ac.uk}}

\date{ }

\documentclass[hidelinks, 11pt]{article}

\usepackage[margin=0.78in]{geometry}
\usepackage{hyperref}
\usepackage{amssymb, amsmath}
\usepackage{graphicx}
\usepackage{booktabs}
\usepackage{multirow}
\usepackage{makecell}
\usepackage{bm}
\usepackage{caption}
\usepackage{subcaption}
\usepackage{natbib}
\usepackage{pdfpages}
\usepackage{comment}
\usepackage{romannum} 
\usepackage{amsthm}
\usepackage{enumitem}
\usepackage{tikz}
\usepackage{float} 
\usepackage{array}
\usepackage[nameinlink,capitalise]{cleveref}
\usepackage{rotating}
\usepackage{float}

\newcolumntype{C}[1]{>{\centering\arraybackslash}m{#1}}

\newtheorem{theorem}{Theorem}
\newtheorem{proposition}{Proposition}
\newtheorem{lemma}{Lemma}
\newtheorem{corollary}{Corollary}
\newtheorem{definition}{Definition}

\usetikzlibrary{shapes.geometric, arrows.meta, positioning}
\tikzstyle{block} = [rectangle, draw, fill=white!20, text width=8em, text centered, rounded corners, minimum height=3.5em, line width=0.3mm]
\tikzstyle{decision} = [rectangle, draw, fill=white!20, text width=8em, text centered, rounded corners, node distance=1.5cm and -0.3cm, line width=0.3mm, minimum height=3em]
\tikzstyle{line} = [draw, -Latex]

\begin{document}
\pagenumbering{arabic}
\maketitle

\begin{abstract}
We consider clinical trials in which an experimental treatment is compared with a control in pre-specified patient subpopulations. In such settings, adaptive enrichment designs allow the enrolled population to be modified at an interim analysis, with subpopulations selected according to preplanned rules. Since these interim decisions are data-dependent, valid statistical inference must account for them. We focus on constructing confidence intervals for the treatment effect in the selected population. Confidence interval methods that ignore the possibility of population modification may fail to achieve the desired coverage probability. We propose a new approach that constructs confidence intervals with exact $100(1-\alpha)\%$ coverage conditional on the interim decision. Importantly, our method applies to a broad class of adaptive enrichment designs, rather than a single specific design. Our method involves deriving the distribution of the naive estimator of the treatment effect in the selected population conditional on the interim decision and inverting uniformly most accurate unbiased tests to obtain the confidence interval. We provide an efficient computational procedure and show through extensive simulations that the resulting confidence intervals satisfy the theoretical coverage guarantees.
\end{abstract}

\hspace{-1em} {\em Key words}: Adaptive enrichment design; Interval estimation; Precision medicine; Seamless phase II/III design; Subgroup analysis

\section{Introduction}\label{sec:intro}
Recent advances in biomedical research have revealed a plethora of genetic biomarkers which may be related to the success of medical treatments. For example, metastatic breast cancer patients with overexpression of human epidermal growth factor receptor-2 (HER2) are more likely to get desired benefits from HER2 targeted therapies, such as trastuzumab and pertuzumab, as described by \citet{baselga2001herceptin} and \citet{capelan2013pertuzumab}. Subpopulations may also be associated with other baseline characteristics. For example, a meta-analysis by \citet{kirsch2008initial} suggested that a certain class of antidepressants may only benefit patients with higher baseline severity of depression. 

When it is suspected that a treatment may benefit certain subpopulations only, it may be more efficient and ethically justifiable to incorporate an interim analysis into a trial such that the patient population enrolled is allowed to be modified mid-trial based on accrued data using preplanned rules. For example, if the interim analysis gives evidence that the treatment only benefits a certain subpopulation, patient recruitment can subsequently be restricted to this subpopulation. Such designs are known as adaptive enrichment designs \citep{simon2013adaptive}. \citet{rosenblum2011optimizing} and \citet{rosenblum2015adaptive} showed that, with the same expected sample size, adaptive enrichment designs can identify subpopulations that benefit from the experimental treatment and evaluate subpopulation-specific treatment effects more effectively than standard fixed designs.

In an adaptive enrichment design, multiple treatment effects in different subpopulations are considered, and the change of the population enrollment depends on data observed at interim analysis. This introduces statistical challenges on the inference about the treatment effect in the selected subpopulation. Standard hypothesis testing procedures which ignore the adaptive nature of the design may not guarantee the type I error rate. Similarly, in adaptive enrichment designs, naive point estimators may be biased, and naive confidence intervals may fail to achieve nominal coverage probabilities (e.g. \citet{kimani2015estimation} and \citet{magnusson2013group}). Methods have been proposed to address hypothesis testing and point estimation for adaptive enrichment designs, including \citet{brannath2009confirmatory, jenkins2011adaptive, rosenblum2011optimizing, carreras2013shrinkage, kimani2015estimation, robertson2016accounting, kimani2018point}, and \citet{kimani2020point}. However, confidence intervals for adaptive enrichment designs have drawn less attention.

We focus on constructing confidence intervals for the treatment effect in the subpopulations selected at the interim analysis. \citet{european2007reflection} remarked that using an adaptive design implies that the statistical methods control the pre-specified type I error, and that correct estimates and confidence intervals for the treatment effect are available. Hence, construction of confidence intervals, like hypothesis testing and point estimation, is an integral aspect of inference. 

We consider two-stage adaptive enrichment designs. At the interim analysis, a decision is made based on stage 1 data to determine the enrollment criteria for stage 2, with the options being to continue with the full population, any combination of subpopulations, or stop for futility. We develop the method in a broad class of designs involving multiple subpopulations. The designs we consider only allow changes to population enrollment, rather than total sample size, number of treatments, or randomization probabilities.

Our contribution is a new methodology for constructing confidence intervals that achieve exact $100(1-\alpha)\%$ conditional coverage given the interim decision, i.e., the selected population, as formalized in \cref{sec:problem} and \cref{sec:construction}. We first derive the conditional distribution of the naive treatment effect estimator in the selected population, and show that it is a sufficient statistic under the conditional model. Building on the conditional inference framework of \citet{lehmann1986testing}, we then obtain the associated conditional uniformly most accurate unbiased confidence interval. In adaptive enrichment designs, however, the conditional distributions are analytically complex, and the resulting confidence intervals do not have closed-form expressions, and their calculation is complex. To address this, we develop an efficient numerical procedure that only requires evaluation of a definite integral of a well-defined probability density and root-finding for a monotone continuous function, both of which are readily implemented in standard statistical software. To the best of our knowledge, no existing confidence interval procedure for adaptive enrichment designs provides exact conditional coverage, and our approach is the first to derive and compute the uniformly most accurate unbiased confidence interval in this adaptive design setting.

\section{Related Work} \label{sec:relatedwork}
A number of methods have been proposed to construct confidence intervals for adaptive designs involving treatment selection, subpopulation selection, and early stopping \citep{posch2005testing, sampson2005drop, brannath2006estimation, bowden2008unbiased, koopmeiners2012conditional, magnusson2013group, rosenblum2013confidence, robertson2016accounting, kimani2020point}. In general, these approaches control the overall coverage probability at the nominal level $100(1-\alpha)\%$. However, for the class of adaptive enrichment designs considered in \cref{sec:problem}, these existing procedures are either not applicable or do not provide accurate coverage. We give more details below.

\citet{rosenblum2013confidence} constructed confidence intervals for adaptive enrichment designs by inflating the width of the naive confidence intervals that ignores adaptivity. Their method guarantees at least $100(1-\alpha)\%$ overall coverage, defined as a weighted average of the conditional coverage probabilities across all possible interim decisions. In contrast, the method proposed in this paper provides the stronger guarantee of exact conditional coverage given the interim decision. Furthermore, our framework accommodates more general enrichment designs, including those with futility stopping and multiple subpopulations, in which the direct application of the method of \citet{rosenblum2013confidence} is not straightforward.

\citet{magnusson2013group} considered group sequential designs with subpopulation selection and proposed double bootstrap confidence intervals for the treatment effect in the selected population. Following the bootstrap bias-correction strategy of \citet{davison1997bootstrap}, they iteratively estimated the selection bias of the naive estimator to obtain a bias-corrected confidence interval. Although their aim was to achieve conditional coverage at level $100(1-\alpha)\%$, their simulation results indicated substantial under-coverage in many settings.

\citet{magirr2013simultaneous} presented a general procedure for constructing one-sided simultaneous confidence intervals, which guarantees at least $100(1-\alpha)\%$ overall simultaneous coverage, for a broad class of adaptive designs. \citet{kimani2020point} focused on an adaptive enrichment design and extended this approach to obtain two-sided confidence intervals, referred to as duality confidence intervals. However, duality confidence intervals can be highly conservative and, in certain cases, may be non-informative (e.g. covering the entire parameter space).

\citet{sampson2005drop} studied confidence intervals for drop-the-loser designs, where multiple treatments are compared and the best-performing treatment in stage 1 is selected for continuation in stage 2. They derived the conditional distribution of the naive treatment effect estimator given that the treatment is selected, and constructed confidence intervals by inverting two one-sided level $\alpha/2$ uniformly most powerful tests \citep{lehmann1986testing}. Our work extends this conditional inference perspective by constructing the uniformly most accurate unbiased confidence interval which is obtained by inverting the two-sided level $\alpha$ uniformly most powerful unbiased test. Moreover, in contrast to the design in \citet{sampson2005drop}, our procedure applies to adaptive enrichment designs with multiple subpopulations and stopping for futility.

\section{Problem Definition and Notation} \label{sec:problem}

\subsection{Data description, notation, and assumptions} \label{sec:notation}
Suppose the full patient population can be partitioned into $k$ disjoint subpopulations. Denote the subpopulations 1 to $k$ and the full population by $\mathcal{S}_1$, $\dots$, $\mathcal{S}_k$, and $\mathcal{F}$, respectively. We write $\mathcal{K} = \{1,\dots, k\}$ as the index set for the $k$ subpopulations. We have $\bigcup_{m \in \mathcal{K}} \mathcal{S}_m = \mathcal{F}$ and $ \mathcal{S}_m \bigcap \mathcal{S}_{m'} = \emptyset$ for any $m, m' \in \mathcal{K}, \ m \neq m'$. For each $m \in \mathcal{K}$, let $p_m$ denote the proportion of patients from subpopulation $\mathcal{S}_m$ in the full population where $\sum_{m \in \mathcal{K}} p_m =1$. We focus on adaptive enrichment designs with two stages. Let $n$ denote the total sample size, and let $n_j = w_j n$ be the sample size in stage $j \in \{1,2\}$, where $w_1 + w_2 = 1$. In the first stage where patients are enrolled from the full population, for each subpopulation $\mathcal{S}_m$, we assume that the proportion of patients from $\mathcal{S}_m$ is the same as the subpopulation proportion $p_m$. After the interim analysis, stage 2 enrollment may be restricted to any combination of subpopulations that is likely to benefit, which is called enrichment. Let $\mathcal{K}^{(2)}$ denote the index set for the selected subpopulations, where $\mathcal{K}^{(2)} \subseteq \mathcal{K}$. Define the selected subpopulations by $\mathcal{S}_{\mathcal{K}^{(2)}} = \bigcup_{m \in \mathcal{K}^{(2)}} \mathcal{S}_m$. In the second stage, we assume that patients are enrolled from subpopulation $m \in \mathcal{K}^{(2)}$ in proportion $p_m/p_{\mathcal{K}^{(2)}}$, where $p_{\mathcal{K}^{(2)}} = \sum_{m \in \mathcal{K}^{(2)}} p_m$. Within each enrolled subpopulation during a given stage, we assume that half of the patients are randomized to the treatment arm and half to the control arm.

For each patient $i \in  \{1,\dots,n\}$, we collect the following data, $(S_i, J_i, A_i, Y_i)$, where $S_i \in \mathcal{K}$ denotes the subpopulation index, $J_i \in \{1,2\}$ is the stage when the patient is enrolled, $A_i \in \{0,1\}$ is the treatment arm indicator ($0$ for control, $1$ for treatment), and $Y_i \in \mathbb{R}$ is the outcome. We assume that outcomes are normally distributed, with mean $\mu_{ma}$ in subpopulation $\mathcal{S}_m$ and study arm $a \in \{0,1\}$, and with common variance $\sigma^2$. Throughout the paper, we assume larger outcome values correspond to greater treatment effect. For each subpopulation $\mathcal{S}_m$, define the true treatment effect in $\mathcal{S}_m$ by $\Delta_m$, where $\Delta_m = \mu_{m1} - \mu_{m0}$. For any combination of subpopulations $\mathcal{S}_{\mathcal{K}^{(2)}}$, we define the true treatment effect in $\mathcal{S}_{\mathcal{K}^{(2)}}$ by $\Delta_{{\mathcal{K}^{(2)}}}$, where $\Delta_{\mathcal{K}^{(2)}} = ({\sum_{m \in \mathcal{K}^{(2)}} p_m\Delta_m})/{p_{\mathcal{K}^{(2)}}}$.Accordingly, the true treatment effect in the full population $\mathcal{F}$ is given by $\Delta_{\mathcal{K}} = \sum_{m \in \mathcal{K}} p_m \Delta_m$.

Throughout the paper, we assume $n_1$ and $n_2$ are fixed at the beginning of the trial. In addition, we suppose $p_m$ for each $m \in \mathcal{K}$ and $\sigma$ are known.

\subsection{Definition of test statistics and naive confidence interval}

For any combination of subpopulations $\mathcal{S}_{\mathcal{K}^{(2)}}$ and each stage $j \in \{1,2\}$ in which there is an enrollment in $\mathcal{S}_m$ for any $m \in \mathcal{K}^{(2)}$, define the sample mean difference between treatment versus control by
\begin{equation*}
	\hat{\Delta}_{\mathcal{K}^{(2)}}^{(j)} = \frac{\sum_{\{i:S_i \in \mathcal{K}^{(2)}, J_i = j, A_i = 1\}}Y_i}{\vert \{i:S_i \in \mathcal{K}^{(2)}, J_i = j, A_i = 1\} \vert} - \frac{\sum_{\{i:S_i \in \mathcal{K}^{(2)}, J_i = j, A_i = 0\}}Y_i}{\vert \{i:S_i \in \mathcal{K}^{(2)}, J_i = j, A_i = 0\} \vert} .
\end{equation*}

In the following, we define the naive point estimator for the true treatment effect to be the estimator obtained by pooling data of all patients enrolled in both stage 1 and 2. If the trial continues with the combination of subpopulations $\mathcal{S}_{\mathcal{K}^{(2)}}$, we define the naive point estimator for the true treatment effect in $\mathcal{S}_{\mathcal{K}^{(2)}}$ by
\begin{equation*}
	\hat{\Delta}_{\mathcal{K}^{(2)}} = \frac{\sum_{\{i:S_i \in \mathcal{K}^{(2)}, A_i = 1\}}Y_i}{\vert \{i:S_i \in \mathcal{K}^{(2)}, A_i = 1\} \vert} - \frac{\sum_{\{i:S_i \in \mathcal{K}^{(2)}, A_i = 0\}}Y_i}{\vert \{i:S_i \in \mathcal{K}^{(2)}, A_i = 0\} \vert}.
\end{equation*}

We define the naive $100(1-\alpha)\%$ confidence interval that ignores the adaptive nature of the design as
\begin{equation*}
	\left[\,\hat{\Delta}_{ \mathcal{K}^{(2)} } + \Phi^{-1}({\alpha}/{2}) \frac{2\sigma}{\sqrt{p_{\mathcal{K}^{(2)}}n_1 + n_2}},\; \hat{\Delta}_{ \mathcal{K}^{(2)} } + \Phi^{-1}(1 - {\alpha}/{2}) \frac{2\sigma}{\sqrt{p_{\mathcal{K}^{(2)}}n_1 + n_2}}\right]
\end{equation*}

For notational simplicity, for single subpopulation $\mathcal{S}_m$, we write $\hat{\Delta}_{m}^{(j)}$ and $\hat{\Delta}_{m}$ instead of $\hat{\Delta}_{\{m\}}^{(j)}$ and $\hat{\Delta}_{\{m\}}$, respectively, throughout the paper.

\subsection{Decision rule} \label{sec:decision}

Denote the stage 1 data by $\mathcal{X}^{(1)} = \{(S_i, J_i, A_i, Y_i): i \in \{1,\dots,n\},\, J_i = 1\}$. Suppose that $\mathcal{X}^{(1)}$ is available for the interim analysis. Let $D: \mathcal{X}^{(1)} \mapsto \mathcal{K}^{(2)}$ denote the decision rule, which maps the stage 1 data to a unique index set $\mathcal{K}^{(2)} \subseteq \mathcal{K}$ representing the subpopulations to be enrolled in stage 2, with $D(\mathcal{X}^{(1)}) = \emptyset$ indicating that the trial stops for futility after stage 1.

We introduce a general class of interim decision rules, denoted by $\mathcal{D}$, for two-stage adaptive enrichment designs, to which the theoretical results in \cref{sec:construction} apply for constructing the proposed confidence intervals. Decision rules in $\mathcal{D}$ satisfy the following condition: for each non-empty realization of $\mathcal{K}^{(2)}$, there exist random thresholds $L_{\mathcal{K}^{(2)}}$ and $U_{\mathcal{K}^{(2)}}$ such that
\begin{align*}
	\{\mathcal{X}^{(1)}: D(\mathcal{X}^{(1)}) = \mathcal{K}^{(2)} \} = \{\mathcal{X}^{(1)}: L_{\mathcal{K}^{(2)}} < \hat{\Delta}^{(1)}_{\mathcal{K}^{(2)}} < U_{\mathcal{K}^{(2)}}\}.
\end{align*}
Here, $L_{\mathcal{K}^{(2)}}$ and $U_{\mathcal{K}^{(2)}}$ may also take values in $\overline{\mathbb{R}} = \mathbb{R} \cup \{-\infty,+\infty\}$. In addition, $L_{\mathcal{K}^{(2)}}$ and $U_{\mathcal{K}^{(2)}}$ are assumed to be independent of $\hat{\Delta}^{(1)}_{\mathcal{K}^{(2)}}$. The inequalities may be taken as either strict or non-strict, as the distinction does not affect the validity of the proposed method. Intuitively, a subpopulation is selected if its stage~1 sample mean $\hat{\Delta}^{(1)}_{\mathcal{K}^{(2)}}$ lies within a range whose limits are determined by random variables independent of $\hat{\Delta}^{(1)}_{\mathcal{K}^{(2)}}$.

If the trial continues with the subpopulations $\mathcal{S}_{\mathcal{K}^{(2)}}$, we aim to construct a $100(1-\alpha)\%$ confidence interval for $\Delta_{\mathcal{K}^{(2)}}$, whose conditional coverage probability satisfies
\begin{equation*}
	\Pr\!\Big( \Delta_{\mathcal{K}^{(2)}} \in \text{CI}_{1-\alpha}(\Delta_{\mathcal{K}^{(2)}})
	\,\big|\, D(\mathcal{X}^{(1)}) = \mathcal{K}^{(2)} \Big) = 1-\alpha.
\end{equation*}

Although the class $\mathcal{D}$ is defined above in terms of constraints on the stage 1 estimator, our framework also applies to certain interim decision rules that additionally depend on auxiliary statistics, provided that the resulting selection event can be characterized through joint constraints on $\hat{\Delta}^{(1)}_{\mathcal{K}^{(2)}}$ and the auxiliary statistic. Under this condition, the proposed confidence interval in \cref{sec:construction} remain valid. For completeness, an example involving such an auxiliary statistic as in \citet{kimani2015estimation} is provided in Appendix C, where we illustrate this extension.

\subsection{Example designs with decision rule in class $\mathcal{D}$}\label{sec:designs}

The class $\mathcal{D}$ encompasses a broad range of interim decision rules arising in adaptive enrichment designs, including those proposed in the literature \citep{rosenblum2013confidence, kimani2015estimation, kimani2018point}. We first describe the decision rule of \citet{rosenblum2013confidence} and the rule proposed in this paper, which are used in the worked example (\cref{sec:example}) and simulation study (\cref{sec:sim}). Additional illustrations corresponding to the decision rules of \citet{kimani2015estimation, kimani2018point} are provided in Appendix C.

\subsubsection{The interim decision rule proposed by \citet{rosenblum2013confidence}} \label{sec:design1}
\citet{rosenblum2013confidence} considered the two subpopulations case, i.e., $\mathcal{K} = \{1,2\}$ and proposed a decision rule based on $Z$-statistics. Define the stage 1 $Z$-statistics in the full population $\mathcal{F}$, $\mathcal{S}_1$, and $\mathcal{S}_2$, respectively, as
\begin{equation*}
	Z_{\mathcal{K}}^{(1)}= \frac{\hat{\Delta}_{\mathcal{K}}^{(1)}}{2\sigma / \sqrt{n_1}}, \qquad Z_{1}^{(1)} = \frac{\hat{\Delta}_{1}^{(1)}}{2\sigma / \sqrt{p_1 n_1}}, \qquad Z_{2}^{(1)}= \frac{\hat{\Delta}_{2}^{(1)}}{2\sigma / \sqrt{p_2 n_1}}.
\end{equation*}
Let $D_1$ denote the decision rule shown in \cref{fig:decision1}. For a fixed $Z_*$, if $Z_{\mathcal{K}}^{(1)}> Z_*$, the trial continues with the full population $\mathcal{F}$, i.e., $D_1(\mathcal{X}^{(1)}) = \{1,2\}$. Otherwise, the trial enriches to the subpopulation with the greater standardized stage 1 sample mean difference such that $D_1(\mathcal{X}^{(1)}) = \{ \arg\max_{m \in \{1,2\}} Z_m^{(1)}\}$ with ties broken arbitrarily. Here, we assume the trial enriches to $\mathcal{S}_1$ when $Z_{1}^{(1)} = Z_{2}^{(1)}$. In this design, the possible values of $D_1(\mathcal{X}^{(1)})$ are $\{1,2\}$, $\{1\},$and $ \{2\}$ corresponding respectively to continuation in the full population $\mathcal{F}$, enrichment in $\mathcal{S}_1$, and enrichment in $\mathcal{S}_2$. 

\begin{figure}[!htbp]
	\begin{center}
		\begin{tikzpicture}[node distance=1.5cm and 0cm, auto]
			\node (start) [decision] {$Z_{\mathcal{K}}^{(1)}> Z_*$};
			\node (outcome1) [block, below left=2cm and 0cm of start] {Enrol from $\mathcal{F}$};
			\node (outcome2) [block, below right=2cm and 0cm of start] {Enrol from $\mathcal{S}_{\arg\max_{m \in \{1,2\}} Z_m^{(1)}}$};
			\path [line] (start) -- node [midway, left] {Yes \hspace{0.5em}} (outcome1);
			\path [line] (start) -- node [midway, right] {\hspace{0.5em}  No} (outcome2);
		\end{tikzpicture}
	\end{center}
	\caption{Schematic diagram of the interim decision rule $D_1$ (proposed by \citet{rosenblum2013confidence})}
	\label{fig:decision1}
\end{figure}
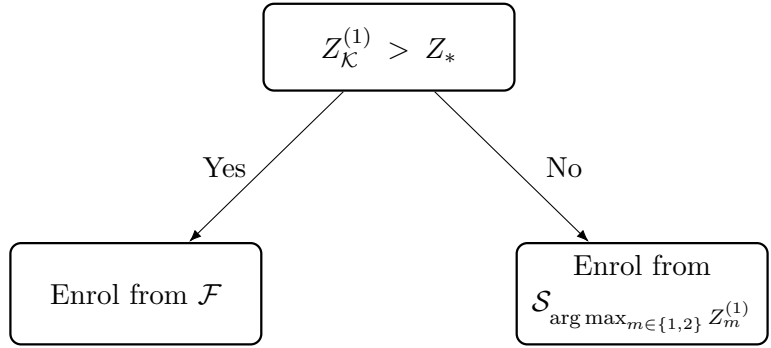

In Appendix B.1, we show that $D_1$ induces the following partition of the stage 1 sample space and hence lies in class $\mathcal{D}$.
\begin{align*}
	&\{\mathcal{X}^{(1)}: D_1(\mathcal{X}^{(1)}) = \mathcal{K}\} = \{\mathcal{X}^{(1)}: \frac{2\sigma}{\sqrt{n_1}} Z_* < \hat{\Delta}_{\mathcal{K}}^{(1)} < +\infty \} \\
	&\{\mathcal{X}^{(1)}: D_1(\mathcal{X}^{(1)}) = \{1\}\} = \{\mathcal{X}^{(1)}: \sqrt{\frac{p_2}{p_1}} \hat{\Delta}_{2}^{(1)} \leq \hat{\Delta}_{1}^{(1)} \leq \frac{2\sigma}{p_1 \sqrt{n_1}}Z_* - \frac{p_2}{p_1}\hat{\Delta}_{2}^{(1)} \} \\
	&\{\mathcal{X}^{(1)}: D_1(\mathcal{X}^{(1)}) = \{2\}\} = \{\mathcal{X}^{(1)}: \sqrt{\frac{p_1}{p_2}} \hat{\Delta}_{1}^{(1)} < \hat{\Delta}_{2}^{(1)} \leq \frac{2\sigma}{p_2 \sqrt{n_1}}Z_* - \frac{p_1}{p_2}\hat{\Delta}_{1}^{(1)} \}
\end{align*}

\subsubsection{An interim decision rule with stopping for futility} \label{sec:design2}
We also consider the two subpopulations case ($\mathcal{K} = \{1,2\}$) and propose an interim decision rule with stopping for futility, which may terminate the trial when the stage 1 observations are not promising. The decision rule, $D_2$, is illustrated in \cref{fig:decision2}. If the stage 1 sample mean difference in the full population $\mathcal{F}$ exceeds the pre-specified threshold $\Delta_*$, the trial continues with $\mathcal{F}$ in stage 2. Otherwise, if the larger of the stage 1 sample mean differences in $\mathcal{S}_1$ and $\mathcal{S}_2$ exceeds $\Delta_*$, the trial enriches to the corresponding subpopulation. If none of these stage 1 sample mean differences exceeds $\Delta_*$, the trial stops for futility. The possible values of $D_2(\mathcal{X}^{(1)})$ are $\{1,2\}$, $\{1\}, \{2\}$, and $\emptyset$, corresponding respectively to continuation in the full population $\mathcal{F}$, enrichment in $\mathcal{S}_1$, enrichment in $\mathcal{S}_2$, and stopping for futility in stage 2. 

\begin{figure}[!htbp]
	\begin{center}
		\begin{tikzpicture}[node distance=1.5cm and 0cm, auto]
			\node (start) [decision] {$\hat{\Delta}_{\mathcal{K}}^{(1)}> \Delta_*$};
			\node (outcome1) [block, below left=3.55cm and 0cm of start] {Enrol from $\mathcal{F}$};
			\node (dec2) [decision, below right=1cm and 0cm of start, text width=15em] {$\max(\hat{\Delta}_{1}^{(1)}, \hat{\Delta}_{2}^{(1)}) > \Delta_*$};
			\node (outcome2) [block, below left=1.3cm and -3.5cm of dec2, text width=12em] {Enrol from $\mathcal{S}_{\arg\max_{m \in \{1,2\}}\hat{\Delta}_{m}^{(1)}}$};
			\node (outcome3) [block, below right=1.3cm and -1cm of dec2] {Stop} ;
			
			\path [line] (start) -- node [midway, left] {Yes \hspace{0.5em}} (outcome1);
			\path [line] (start) -- node [midway, right] {\hspace{0.5em}  No} (dec2);
			\path [line] (dec2) -- node [midway, left] {Yes \hspace{0.5em}} (outcome2);
			\path [line] (dec2) -- node [midway, right] {\hspace{0.5em} No} (outcome3);
		\end{tikzpicture}
	\end{center}
	\caption{Schematic diagram of the interim decision rule $D_2$}
	\label{fig:decision2}
\end{figure}
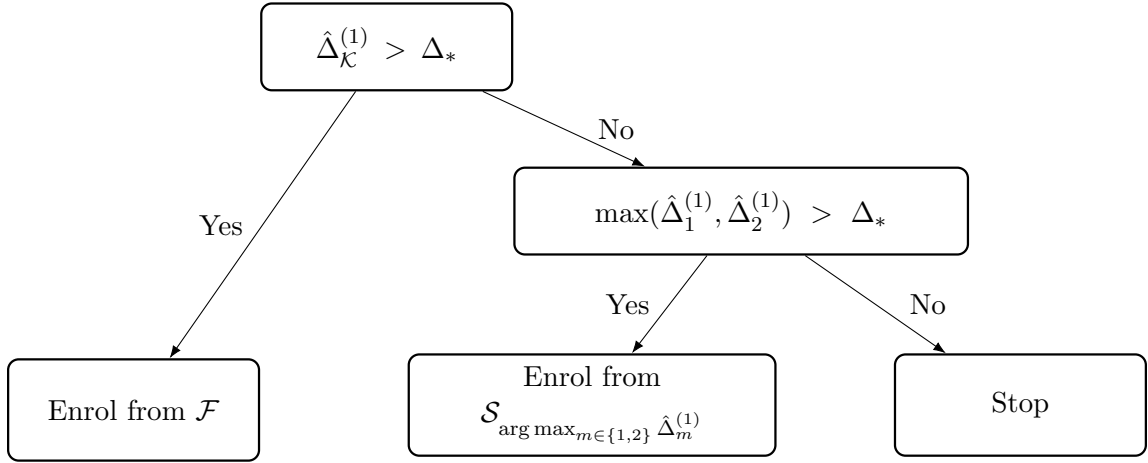

In Appendix B.2, we demonstrate that $D_2$ yields the following partition of the stage 1 sample space and also lies in class $\mathcal{D}$.
\begin{align*}
	&\{\mathcal{X}^{(1)}: D_2(\mathcal{X}^{(1)}) = \mathcal{K}\} = \{\mathcal{X}^{(1)}: \Delta_* < \hat{\Delta}^{(1)}_{\mathcal{K}} < +\infty\} \\
	&\{\mathcal{X}^{(1)}: D_2(\mathcal{X}^{(1)}) = \{1\}\} =  \{\mathcal{X}^{(1)}: \Delta_*<\hat{\Delta}_{1}^{(1)} \leq (\Delta_* - p_2 \hat{\Delta}_{2}^{(1)})/p_1\} \\
	&\{\mathcal{X}^{(1)}: D_2(\mathcal{X}^{(1)}) = \{2\}\} =  \{\mathcal{X}^{(1)}: \Delta_*<\hat{\Delta}_{2}^{(1)} \leq (\Delta_* - p_1 \hat{\Delta}_{1}^{(1)})/p_2\}
\end{align*}

\subsection{Co-primary analysis}\label{sec:co-primary}
Following \citet{jenkins2011adaptive}, when the full population continues to stage 2, investigators may still be interested in the treatment effect in a particular subpopulation and in how this effect compares with that in the full population, either as an enhancement or an attenuation. Therefore, a co-primary analysis can be included. For example, in the case of two subpopulations, when the full population $\mathcal{F}$ continues to stage 2, $100(1-\alpha)\%$ confidence intervals for $\Delta_1$ and $\Delta_2$ may also be constructed, with conditional coverage

\begin{align*}
	\Pr\!\Big( \Delta_{1} \in \text{CI}_{1-\alpha}(\Delta_{1})
	\,\big|\, D(\mathcal{X}^{(1)}) = \{1,2\} \Big) = 1-\alpha \ \intertext{and} \  \Pr\!\Big( \Delta_{2} \in \text{CI}_{1-\alpha}(\Delta_{2})
	\,\big|\, D(\mathcal{X}^{(1)}) = \{1,2\} \Big) = 1-\alpha.
\end{align*}

\section{Construction of Conditional Uniformly Most Accurate Unbiased Confidence Intervals}\label{sec:construction}

\subsection{General method and theorem}\label{sec:method}

\begin{theorem} \label{theorem}
	Suppose $\hat{\Delta}^{(1)} \sim \mathcal{N}(\Delta, \sigma_{(1)}^2)$ and $\hat{\Delta}^{(2)} \sim \mathcal{N}(\Delta, \sigma_{(2)}^2)$, where $\sigma_{(1)}^2$ and $\sigma_{(2)}^2$ are known. Let $\tau_{(1)} = 1/{\sigma_{(1)}^2},\ \tau_{(2)} = 1/{\sigma_{(2)}^2},\ \hat{\Delta}=\tau_{(1)}/(\tau_{(1)} + \tau_{(2)})\hat{\Delta}^{(1)} + \tau_{(2)}/(\tau_{(1)} + \tau_{(2)})\hat{\Delta}^{(2)}$ and $\sigma_{(12)}^2 = \sigma_{(1)}^2 \sigma_{(2)}^2/ (\sigma_{(1)}^2 + \sigma_{(2)}^2)$.
	\begin{enumerate}
		\item[(i)] The conditional distribution of $\hat{\Delta}$ given $l < \hat{\Delta}^{(1)}  < u$, where $l$ and $u$ are constants in $\overline{\mathbb{R}} = \mathbb{R} \cup \{-\infty, +\infty\}$, is
		\begin{equation*}
			f_{\Delta}(\hat{\Delta} \mid l < \hat{\Delta}^{(1)} < u) = \frac{1}{\sigma_{(12)}} \phi(\frac{\hat{\Delta} - \Delta}{\sigma_{(12)}}) \frac{\Phi(\frac{u - \hat{\Delta}}{(\sigma_{(1)}/\sigma_{(2)}) \sigma_{(12)}}) - \Phi(\frac{l - \hat{\Delta}}{(\sigma_{(1)}/\sigma_{(2)}) \sigma_{(12)}})}{\Phi(\frac{u - \Delta}{\sigma_{(1)}}) - \Phi(\frac{l - \Delta}{\sigma_{(1)}})},
		\end{equation*}
		where $\phi$ and $\Phi$ are probability density function and cumulative distribution function of standard normal distribution. Furthermore, $\hat{\Delta}$ is a sufficient statistic for $\Delta$.
		\item[(ii)]  The conditional level $\alpha$ uniformly most powerful unbiased test for $H: \Delta = \Delta_0$, given $l < \hat{\Delta}^{(1)}  < u$, exists. The critical function $\psi_{\Delta_0}(\hat{\Delta})$ for testing $H$ is
		\begin{equation*}
			\psi_{\Delta_0}(\hat{\Delta}) = 
			\begin{cases}
				1 & \text{when } \hat{\Delta} < C_1(\Delta_0) \text{ or } \hat{\Delta} > C_2(\Delta_0), \\
				0 & \text{when } C_1(\Delta_0) \leq \hat{\Delta} \leq C_2(\Delta_0).
			\end{cases}
		\end{equation*}
		where $C_1(\Delta_0)$ and $C_2(\Delta_0)$ are determined by
		\begin{equation*}
			\begin{cases}
				\mathbb{E}_{\Delta_0}(\psi_{\Delta_0}( \hat{\Delta})) = \alpha, \\
				\mathbb{E}_{\Delta_0}( \hat{\Delta}\psi_{\Delta_0}( \hat{\Delta})) = \alpha \mathbb{E}_{\Delta_0}( \hat{\Delta}).
			\end{cases}
		\end{equation*}
		\item[(iii)]  The inverse functions, $C_1^{-1}$ and $C_2^{-1}$, exist. By inverting the uniformly most powerful unbiased tests, the conditional $100(1 - \alpha)\%$ uniformly most accurate unbiased confidence interval for $\Delta$, $(\underline{\Delta},\ \overline{\Delta})$, is given by
		\begin{equation*}
			\underline{\Delta} = C_2^{-1}(\hat{\Delta}),\;\;
			\overline{\Delta} = C_1^{-1}(\hat{\Delta}).
		\end{equation*}
	\end{enumerate}
\end{theorem}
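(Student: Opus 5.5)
For part (i), I will use that $\hat{\Delta}^{(1)}$ and $\hat{\Delta}^{(2)}$ are independent (they come from disjoint stage-wise samples). Then $(\hat{\Delta},\hat{\Delta}^{(1)})$ is bivariate normal. We have $\mathrm{Var}(\hat{\Delta})=\sigma_{(12)}^2$ and $\mathrm{Cov}(\hat{\Delta},\hat{\Delta}^{(1)})=\tau_{(1)}/(\tau_{(1)}+\tau_{(2)})\,\sigma_{(1)}^2=\sigma_{(12)}^2$. Hence $\hat{\Delta}^{(1)}\mid\hat{\Delta}$ is normal with mean $\hat{\Delta}$ (the $\Delta$ terms cancel) and variance $\sigma_{(1)}^2-\sigma_{(12)}^2=\sigma_{(1)}^4/(\sigma_{(1)}^2+\sigma_{(2)}^2)=(\sigma_{(1)}/\sigma_{(2)})^2\sigma_{(12)}^2$. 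Crucially, this conditional law does not depend on $\Delta$.

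By Bayes, the conditional density is $f(\hat{\Delta})\Pr(l<\hat{\Delta}^{(1)}<u\mid\hat{\Delta})/\Pr(l<\hat{\Delta}^{(1)}<u)$, which is exactly the displayed formula. For sufficiency, I will write the joint conditional density of $(\hat{\Delta}^{(1)},\hat{\Delta}^{(2)})$ restricted to the selection event as
\[
\phi\big((\hat{\Delta}^{(1)}-\Delta)/\sigma_{(1)}\big)\,\phi\big((\hat{\Delta}^{(2)}-\Delta)/\sigma_{(2)}\big)\,\mathbf{1}\{l<\hat{\Delta}^{(1)}<u\}/P_\Delta.
\]
Expanding the squares, the $\Delta$-dependence enters only through $\exp\{\Delta(\tau_{(1)}+\tau_{(2)})\hat{\Delta}\}$ and a normalizing factor. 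The indicator does not involve $\Delta$, so the factorization theorem gives sufficiency. The same computation shows that the conditional family of $\hat{\Delta}$ is a one-parameter exponential family with natural parameter $\theta=\Delta(\tau_{(1)}+\tau_{(2)})$ and carrier $h(\hat{\Delta})=\exp\{-\hat{\Delta}^2/(2\sigma_{(12)}^2)\}[\Phi(\cdot)-\Phi(\cdot)]$.

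For part (ii), I will invoke Lehmann's theorem for one-parameter exponential families (Testing Statistical Hypotheses, Ch.~4, the two-sided point-null case). For $H:\theta=\theta_0$ versus $\theta\neq\theta_0$, a UMP unbiased test exists, rejects outside $[C_1,C_2]$, and is determined by the size condition together with $\mathbb{E}_{\theta_0}(T\psi)=\alpha\,\mathbb{E}_{\theta_0}(T)$. Since $\Delta\mapsto\theta$ is linear and increasing, the result transfers directly to $\Delta_0$. Because $\hat{\Delta}$ is continuous, no randomization is needed at the boundary. The only thing to check is that the natural parameter space is open, here all of $\mathbb{R}$, which holds because the density is dominated by a Gaussian.

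For part (iii), I will argue that $C_1(\Delta_0)$ and $C_2(\Delta_0)$ are continuous and strictly increasing in $\Delta_0$, so the inverses exist. The acceptance region is then $\{\Delta_0: C_1(\Delta_0)\le\hat{\Delta}\le C_2(\Delta_0)\}=[C_2^{-1}(\hat{\Delta}),C_1^{-1}(\hat{\Delta})]$. Inverting a family of UMPU tests gives a UMA unbiased confidence set (Lehmann, Ch.~5), with conditional coverage $1-\alpha$.

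The main obstacle is the strict monotonicity of $C_1$ and $C_2$. I will first try the standard route: the exponential family has a monotone likelihood ratio in $\hat{\Delta}$, which is the property that underlies Lehmann's result that the UMPU critical values are nondecreasing in the null parameter. For strictness and continuity, I will apply the implicit function theorem to the two defining equations. Writing them as $F(C_1)+1-F(C_2)=\alpha$ together with the moment equation, the Jacobian in $(C_1,C_2)$ is nonsingular whenever $C_1<C_2$ and the density is positive, which holds here. Differentiating in $\Delta_0$ and using the MLR property, I expect the sign of $dC_i/d\Delta_0$ to come out positive. If this gets messy, a fallback is a contradiction argument: if $C_1(\Delta_0)\ge C_1(\Delta_0')$ for some $\Delta_0<\Delta_0'$, then a stochastic-ordering comparison under MLR would violate one of the two side conditions.
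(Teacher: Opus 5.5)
Your proposal is correct. On (ii) and the overall shape of (iii) it matches the paper, which likewise cites Lehmann's two-sided UMPU theorem for one-parameter exponential families and his Lemma 5.5.1 for the inversion.

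For (i), the paper refactors the product of the two stage-wise normal densities, changes variables to $(\hat{\Delta}^{(1)},\hat{\Delta})$, and integrates $\hat{\Delta}^{(1)}$ over $(l,u)$. Your route uses $\hat{\Delta}^{(1)}\mid\hat{\Delta}\sim\mathcal{N}(\hat{\Delta},(\sigma_{(1)}/\sigma_{(2)})^2\sigma_{(12)}^2)$, which is free of $\Delta$, followed by Bayes. This is the same computation, organised more transparently. Your sufficiency argument factorizes the joint conditional density of $(\hat{\Delta}^{(1)},\hat{\Delta}^{(2)})$ on the selection event, which is the stronger and more relevant statement. The paper only observes that the law of $\hat{\Delta}$ itself is an exponential family.

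The genuine difference is how you obtain the existence of $C_1^{-1},C_2^{-1}$. The paper takes strict monotonicity from Lemma 5.5.1. It then proves continuity separately (Appendix A.5), by contradiction with uniqueness of the UMPU test plus dominated convergence. Your implicit-function-theorem route gives continuity (indeed differentiability) and strict monotonicity together. It does need global uniqueness of the solution of the side conditions (e.g.\ the paper's Lemma 1), so that the local branch can be identified with the UMPU constants.

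The sign step you left as an expectation is not really an MLR fact; it follows directly from the side conditions. With natural parameter $\theta$, one has $\partial_\theta f_\theta(t)=(t-\mu)f_\theta(t)$ and $\mu'(\theta)=\mathrm{Var}_\theta(\hat{\Delta})$. Differentiating the side conditions gives $f_\theta(C_1)C_1'=f_\theta(C_2)C_2'=:k$ and $k(C_2-C_1)=(1-\alpha)\mathrm{Var}_\theta(\hat{\Delta})-\int_{C_1}^{C_2}(t-\mu)^2f_\theta(t)\,dt$. Now write $(t-C_1)(t-C_2)=(t-\mu)^2+a(t-\mu)+b$, which is negative inside and positive outside $[C_1,C_2]$. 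Integrate it against $f_\theta$ over each region; each has $\int(t-\mu)f_\theta=0$, with masses $1-\alpha$ and $\alpha$. This gives $\alpha\int_{\mathrm{in}}(t-\mu)^2f_\theta<(1-\alpha)\int_{\mathrm{out}}(t-\mu)^2f_\theta$, which is exactly positivity of the right-hand side. Hence $k>0$ and both $C_i'>0$.
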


\proof
See Appendix A.1.
\endproof

We denote our proposed conditional uniformly most accurate unbiased confidence interval by C-UMAU confidence interval.

\begin{proposition} \label{proposition:2}
	The C-UMAU confidence interval for $\Delta$ has the following properties:
	\begin{enumerate}
		\item[(i)]  Exact conditional coverage: $\Pr \big(\Delta \in (\underline{\Delta},\ \overline{\Delta}) \mid l < \hat{\Delta}^{(1)} < u \big) = 1 - \alpha$.
		\item[(ii)]  Conditional unbiased: $\Pr \big(\Delta' \in (\underline{\Delta},\ \overline{\Delta}) \mid l < \hat{\Delta}^{(1)} < u \big) \leq 1 - \alpha$, for any $\Delta' \neq \Delta$.
		\item[(iii)]  Uniformly most accurate among unbiased intervals: among all confidence intervals with conditional coverage at least $1 - \alpha$, the C-UMAU interval minimizes $\Pr \big(\Delta' \in (\underline{\Delta},\ \overline{\Delta}) \mid l < \hat{\Delta}^{(1)} < u \big)$ for all $\Delta' \neq \Delta$.
	\end{enumerate}
\end{proposition}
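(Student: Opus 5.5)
The plan is to derive all three properties from the duality between the family of conditional UMPU tests $\{\psi_{\Delta_0}\}$ of \cref{theorem}(ii) and the interval in \cref{theorem}(iii). Write $P^c_{\Delta}$ for probability under the conditional law of \cref{theorem}(i), i.e.\ given $l<\hat{\Delta}^{(1)}<u$. This law is a one-parameter exponential family in $\Delta$ with sufficient statistic $\hat{\Delta}$. Indeed, the factor $\phi((\hat{\Delta}-\Delta)/\sigma_{(12)})$ contributes $\exp(\Delta\hat{\Delta}/\sigma_{(12)}^2)$; the $\Phi$-difference in the numerator does not involve $\Delta$; and the denominator is a normalizing constant.

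\textbf{Step 1 (acceptance region and interval coincide).} First I will show that for every $\Delta_0$ the event $\{C_1(\Delta_0)\le\hat{\Delta}\le C_2(\Delta_0)\}$ equals $\{\underline{\Delta}\le \Delta_0\le\overline{\Delta}\}$, up to the null boundary set. This needs $C_1$ and $C_2$ to be strictly increasing and continuous in $\Delta_0$, which is exactly what \cref{theorem}(iii) uses to define $C_1^{-1},C_2^{-1}$. I would justify the monotonicity by the standard exponential-family argument: the family has monotone likelihood ratio in $\hat{\Delta}$, and the two-sided UMPU cutoffs inherit monotonicity (Lehmann, Ch.~3 and~4). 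For instance, if $C_1(\Delta_0')\le C_1(\Delta_0)$ and $C_2(\Delta_0')\le C_2(\Delta_0)$ with $\Delta_0'>\Delta_0$, the MLR property contradicts the unbiasedness side condition. With monotonicity in hand, $C_1(\Delta_0)\le\hat{\Delta}$ is equivalent to $\Delta_0\le C_1^{-1}(\hat{\Delta})=\overline{\Delta}$, and $\hat{\Delta}\le C_2(\Delta_0)$ is equivalent to $\Delta_0\ge C_2^{-1}(\hat{\Delta})=\underline{\Delta}$. Since $\hat{\Delta}$ has a continuous conditional density, open versus closed endpoints do not matter.

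\textbf{Step 2 (properties (i) and (ii)).} Given Step 1, $P^c_{\Delta}(\Delta\in(\underline{\Delta},\overline{\Delta}))=1-\mathbb{E}_{\Delta}\psi_{\Delta}(\hat{\Delta})=1-\alpha$ by the first side condition in \cref{theorem}(ii), which gives (i). For (ii), $P^c_{\Delta}(\Delta'\in(\underline{\Delta},\overline{\Delta}))=1-\mathbb{E}_{\Delta}\psi_{\Delta'}(\hat{\Delta})$. Because $\psi_{\Delta'}$ is an unbiased level-$\alpha$ test of $\Delta=\Delta'$, its power at any $\Delta\neq\Delta'$ is at least $\alpha$, so this probability is at most $1-\alpha$.

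\textbf{Step 3 (property (iii)).} Let $S(\cdot)$ be any other confidence set with conditional coverage at least $1-\alpha$ that is unbiased in the sense of (ii). Its acceptance regions $A(\Delta_0)=\{x:\Delta_0\in S(x)\}$ define tests $\psi'_{\Delta_0}=1-\mathbf{1}_{A(\Delta_0)}$ that have conditional level $\alpha$ and are unbiased. Since $\psi_{\Delta'}$ is UMPU, $\mathbb{E}_{\Delta}\psi'_{\Delta'}\le\mathbb{E}_{\Delta}\psi_{\Delta'}$ for all $\Delta\neq\Delta'$. Equivalently, $P^c_\Delta(\Delta'\in S)\ge P^c_\Delta(\Delta'\in(\underline{\Delta},\overline{\Delta}))$, which is (iii). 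Here I read ``among all confidence intervals'' in (iii) as meaning unbiased ones, as the title of the property indicates; without unbiasedness the claim would fail, so the proof will state this restriction explicitly.

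\textbf{Main obstacle.} The delicate step is the strict monotonicity and continuity of $C_1,C_2$ in $\Delta_0$, which underlies both the inversion and the equality of events in Step 1. \cref{theorem}(iii) asserts that the inverses exist. If needed, I would prove monotonicity directly: by implicit differentiation of the two side conditions with respect to $\Delta_0$, or by appeal to Lehmann's result that the acceptance regions of two-sided UMPU tests in a one-parameter exponential family are intervals whose endpoints are increasing in the parameter. Everything else in the argument is the standard test–interval duality.
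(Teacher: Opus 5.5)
Your proposal is correct and follows essentially the paper's route. The paper also proves (i) by using strict monotonicity of $C_1, C_2$ (Lehmann's Lemma 5.5.1) to identify the coverage event with the acceptance region $\{C_1(\Delta) < \hat{\Delta} < C_2(\Delta)\}$ and then applying the size condition; for (ii) and (iii) it only cites the UMAU definition in Lehmann's Section 5.5, which is the test-interval duality you spell out in Steps 2--3, and your reading of (iii) as restricted to unbiased confidence sets is right and matches the restriction implicit in that citation.
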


\proof
See Appendix A.2.
\endproof

\subsection{Numerical method for calculating the C-UMAU confidence interval}\label{sec:numerical}

Suppose $\hat{\Delta}$ is distributed as in \cref{theorem}(i). We next provide a detailed procedure for constructing the $100(1-\alpha)\%$ C-UMAU confidence interval by inverting the conditional uniformly most powerful tests.

Step 1: Construct the conditional uniformly most powerful unbiased test for $H:\Delta = \Delta_0$.
\begin{enumerate}
	\item For any $\Delta_0 \in \mathbb{R}$, compute $\mathbb{E}_{\Delta_0}(\hat{\Delta})$ using the result from Appendix A.3.
	\item Solve the equation
	\begin{equation*}
		\int_{c_1}^{c_2(c_1)} t \, f_{\Delta_0}(t) \, dt = (1-\alpha)\, \mathbb{E}_{\Delta_0}(\hat{\Delta})
	\end{equation*}
	for $c_1$, where $c_2(c_1) = F^{-1}(F(c_1) + 1 - \alpha)$. We prove $\int_{c_1}^{c_2(c_1)} t \, f_{\Delta_0}(t) \, dt$ is continuous and strictly increasing on $c_1$ in Appendix A.4. Accordingly, we can use any numerical root-finding method such as \texttt{uniroot} in \textsf{R}.
	\item Let $c_1^*$ denote the solution, and define
	\[
	C_1(\Delta_0) = c_1^*, \qquad 
	C_2(\Delta_0) = c_2(c_1^*).
	\]
	\item Then $[C_1(\Delta_0),\, C_2(\Delta_0)]$ is the acceptance region for the conditional uniformly most powerful test in \cref{theorem}(ii).
\end{enumerate}

Step 2: Invert the test to obtain the confidence interval.
\begin{enumerate}
	\item Treat $C_1(\Delta_0)$ and $C_2(\Delta_0)$ as functions of $\Delta_0$. Given the observed $\hat{\Delta}$, solve
	\begin{equation*}
		C_2(\underline{\Delta}) = \hat{\Delta} \quad  \mathrm{and} \quad C_1(\overline{\Delta}) = \hat{\Delta}
	\end{equation*} 
	for the lower limit $\underline{\Delta}(\hat{\Delta})$ and the upper limit $\overline{\Delta}(\hat{\Delta})$, respectively. We prove that $C_1(\Delta_0)$ and $C_2(\Delta_0)$ are continuous and strictly increasing in $\Delta_0$ in Appendix A.5. Then, both roots can be found using \texttt{uniroot} in \textsf{R}.
	\item $\left[\,\underline{\Delta}(\hat{\Delta}),\; \overline{\Delta}(\hat{\Delta})\,\right] $ is the C-UMAU confidence interval.
\end{enumerate}

\subsection{Application on two-stage enrichment designs with interim decision rules in $\mathcal{D}$} \label{sec:application1}
First, we consider the conditional probability density of $\hat{\Delta}_{ \mathcal{K}^{(2)} }$ given the interim decision $D(\mathcal{X}^{(1)}) =  \mathcal{K}^{(2)}$ and the realization of $L_{\mathcal{K}^{(2)}}$ and $U_{\mathcal{K}^{(2)}}$.
\begin{align*}
	& f(\hat{\Delta}_{ \mathcal{K}^{(2)} } \mid D(\mathcal{X}^{(1)}) =  \mathcal{K}^{(2)}, L_{\mathcal{K}^{(2)}} = l_{\mathcal{K}^{(2)}}, U_{\mathcal{K}^{(2)}} = u_{\mathcal{K}^{(2)}}) \\
	= & f(\hat{\Delta}_{\mathcal{K}^{(2)}} \mid l_{\mathcal{K}^{(2)}} < \hat{\Delta}^{(1)}_{\mathcal{K}^{(2)}}< u_{\mathcal{K}^{(2)}}) \\
	= & \frac{\phi(\frac{\hat{\Delta}_{\mathcal{K}^{(2)}} - \Delta_{\mathcal{K}^{(2)}}}{2\sigma/\sqrt{p_{\mathcal{K}^{(2)}}n_1 + n_2}})}{2\sigma/\sqrt{p_{\mathcal{K}^{(2)}}n_1 + n_2}} \frac{\Phi(\frac{l_{\mathcal{K}^{(2)}} - \hat{\Delta}_{\mathcal{K}^{(2)}} }{\sqrt{n_2/(p_{\mathcal{K}^{(2)}}n_1)} \ 2\sigma/\sqrt{n_{\mathcal{K}^{(2)}}}}) - \Phi(\frac{u_{\mathcal{K}^{(2)}} - \hat{\Delta}_{\mathcal{K}^{(2)}} }{\sqrt{n_2/(p_{\mathcal{K}^{(2)}}n_1)} \ 2\sigma/\sqrt{n_{\mathcal{K}^{(2)}}}})}{\Phi(\frac{l_{\mathcal{K}^{(2)}} - \Delta_{\mathcal{K}^{(2)}}}{2\sigma/\sqrt{p_{\mathcal{K}^{(2)}}n_1}}) - \Phi(\frac{u_{\mathcal{K}^{(2)}} - \Delta_{\mathcal{K}^{(2)}}}{2\sigma/\sqrt{p_{\mathcal{K}^{(2)}}n_1}})}.
\end{align*}
The conditional distribution of $\hat{\Delta}_{ \mathcal{K}^{(2)} }$ follows the form in \cref{theorem}(i). Hence, $\hat{\Delta}_{ \mathcal{K}^{(2)} }$ is a sufficient statistic for estimating $\Delta_{ \mathcal{K}^{(2)} }$. Then, we can use the procedure to calculate the C-UMAU confidence interval as described in \cref{sec:numerical}. Since this procedure provides the C-UMAU confidence interval for every realization of $L_{\mathcal{K}^{(2)}}$ and $U_{\mathcal{K}^{(2)}}$, it also constitutes the C-UMAU confidence interval for $\Delta_{ \mathcal{K}^{(2)} }$ given $D(\mathcal{X}^{(1)}) =  \mathcal{K}^{(2)}$ (Theorem 4.4.1 and Lemma 5.5.1 in \citet{lehmann1986testing}). Examples with specific designs will be given in \cref{sec:example} and \cref{sec:sim}.

\section{A computationally simpler alternative based on inversion of two one-sided tests}

The proposed C-UMAU confidence interval is obtained by inverting two-sided conditional uniformly most powerful unbiased tests. In this subsection, we introduce a computationally simpler alternative to the proposed C-UMAU confidence interval, constructed via inversion of two conditional one-sided uniformly most powerful tests. We refer to this construction as the C-TOST (conditional two one-sided tests) confidence interval. By construction, the C-TOST confidence interval guarantees exact conditional coverage, and its lower and upper limits correspond to the conditional uniformly most accurate confidence bounds as defined in \citet{lehmann1986testing}. However, unlike the C-UMAU confidence interval, the resulting interval does not in general satisfy the uniformly most accurate unbiased criterion and is therefore theoretically suboptimal. We include the C-TOST confidence interval as a computationally convenient alternative for comparison.

\begin{definition}
	Let $F_{\Delta}(\hat{\Delta} \mid l < \hat{\Delta}^{(1)} < u)$ denote the conditional cumulative distribution function of $\hat{\Delta}$ given in \cref{theorem}(i). The $100(1-\alpha)\%$ C-TOST confidence interval for $\Delta$, denoted by $(\underline{\Delta}^*,\ \overline{\Delta}^*)$, is defined by
	\begin{align*}
		& F_{\underline{\Delta}^*}(\hat{\Delta} \mid l < \hat{\Delta}^{(1)} < u) = 1 - \frac{\alpha}{2}, \\
		& F_{\overline{\Delta}^*}(\hat{\Delta} \mid l < \hat{\Delta}^{(1)} < u) = \frac{\alpha}{2}.
	\end{align*}
\end{definition}

For two-stage adaptive enrichment designs with interim decision rules in the class $\mathcal{D}$, the conditional probability density function $f(\hat{\Delta}_{ \mathcal{K}^{(2)} } \mid D(\mathcal{X}^{(1)}) =  \mathcal{K}^{(2)}, L_{\mathcal{K}^{(2)}} = l_{\mathcal{K}^{(2)}}, U_{\mathcal{K}^{(2)}} = u_{\mathcal{K}^{(2)}})$ is given in \cref{sec:application1}. For any fixed $\Delta_{\mathcal{K}^{(2)}}$ and observed $\hat{\Delta}_{\mathcal{K}^{(2)}}$, the corresponding cumulative distribution function $F_{\Delta_{\mathcal{K}^{(2)}}}(\hat{\Delta}_{\mathcal{K}^{(2)}} \mid D(\mathcal{X}^{(1)}) = \mathcal{K}^{(2)}, l_{\mathcal{K}^{(2)}}, u_{\mathcal{K}^{(2)}})$ can be computed numerically using \texttt{integrate} in \textsf{R}. Then, the C-TOST confidence limits can be obtained by solving the defining equations using a one-dimensional root-finding algorithm, such as \texttt{uniroot} in \textsf{R}, treating $\Delta_{\mathcal{K}^{(2)}}$ as the scalar argument with fixed $\hat{\Delta}_{\mathcal{K}^{(2)}}$.

\section{Example}\label{sec:example}

In this section, we consider an adaptive enrichment trial and apply the proposed confidence interval procedures to demonstrate their use in a realistic setting. The Early Minimally Invasive Removal of Intracerebral Hemorrhage (ENRICH) trial \citep{pradilla2024trial} used a multi-stage adaptive design that considered potential treatment effect heterogeneity across patient subpopulations. Patients were classified according to hemorrhage location: lobar hemorrhage ($\mathcal{S}_1$) and anterior basal ganglia ($\mathcal{S}_2$). The trial planned a sample size ranging from 150 to 300 patients, with interim analyses triggered after 150, 175, 200, 225, 250, and 275 patients had been enrolled. A total of 300 patients were ultimately recruited. Following an interim analysis conducted after 175 patients, subsequent recruitment was restricted to patients with lobar hemorrhage ($\mathcal{S}_1$).  In the confirmatory analysis, treatment effects in $\mathcal{S}_1$, $\mathcal{S}_2$, and $\mathcal{F}$ were estimated. The subgroup definition, multi-stage conduct, enrollment adaptation, and co-primary analysis features align with the adaptive enrichment designs considered in this paper.

To illustrate the proposed confidence interval procedures within a setting consistent with the ENRICH trial, we consider a simplified two-stage adaptive enrichment design. We assume a maximum sample size of 300 patients, with a single interim analysis conducted after 200 patients using the decision rule defined in \cref{fig:decision2}. Patient accrual follows the framework described in \cref{sec:notation}. This simplified structure aligns with the class of designs studied in this paper. As in the ENRICH trial, the primary endpoint is the utility-weighted modified Rankin scale at 180 days. Following published studies \citep{mendelow2013early,hanley2016safety,pradilla2024trial}, we assume the endpoint follows a normal distribution with common variance across treatment arms and subpopulations, with $\sigma^2 = 0.36^2$. The subgroup prevalences are approximated by $p_1 = p_2 = 0.5$ for analytical simplicity. This approximation is consistent with the observed early accrual in the ENRICH trial, where among the first 175 enrolled patients, 83 (47.4\%) and 92 (52.6\%) were classified as lobar and anterior basal ganglia hemorrhages, respectively. The interim decision threshold is set to $\Delta^* = 0.025$. This value is chosen such that, if the true treatment effect in the full population equals the minimum clinically meaningful effect reported in the ENRICH trial, the stage 1 estimator $\hat{\Delta}^{(1)}_{\mathcal{K}}$ exceeds $\Delta^*$ with probability over $80\%$.

Stagewise summaries from the ENRICH trial are not reported. Therefore, for illustration purposes, we construct stagewise sample means that are numerically consistent with the reported overall sample means in $\mathcal{S}_1$ and $\mathcal{S}_2$. The stage 1 sample means in $\mathcal{S}_1$, $\mathcal{S}_2$, and the full population $\mathcal{F}$ are taken to be $\hat{\Delta}_{1}^{(1)} = 0.113$, $\hat{\Delta}_{2}^{(1)} = 0.013$, and $\hat{\Delta}_{\{1,2\}}^{(1)} = 0.063$, respectively. According to the decision rule, the trial proceeds with the full population. At stage 2, the sample means are taken to be $\hat{\Delta}_{1}^{(2)} = 0.155$, $\hat{\Delta}_{2}^{(2)} = -0.064$, and $\hat{\Delta}_{\{1,2\}}^{(2)} = 0.045$. The corresponding overall naive estimates are $\hat{\Delta}_{1}= 0.127$, $\hat{\Delta}_{2} = -0.013$, and $\hat{\Delta}_{\{1,2\}} = 0.057$. Using these quantities, we calculate 95\% confidence intervals for the treatment effects in $\mathcal{F}$, $\mathcal{S}_1$, and $\mathcal{S}_2$ using the naive, C-UMAU, and C-TOST approaches. Details for constructing the C-UMAU confidence intervals in this example are provided in Appendix B.2.

\begin{table}[!htbp]
	\centering
	\begin{tabular}{l c c c}
		\hline
		Method & CI for $\Delta_{\{1,2\}}$ & CI for $\Delta_{1}$ & CI for $\Delta_{2}$ \\ \hline
		Naive & (-0.024, 0.138)& (0.012, 0.242) & (-0.128, 0.102) \\
		C-UMAU & (-0.079, 0.131)& (-0.028, 0.240) & (-0.200, 0.093) \\
		C-TOST & (-0.078, 0.132)& (-0.025, 0.240) & (-0.198, 0.094) \\ \hline	
	\end{tabular}
	\caption{95\% confidence intervals for treatment effects in the worked example under the naive, C-UMAU, and C-TOST methods.}
	\label{tab:example}
\end{table}

\cref{tab:example} reports the $95\%$ confidence intervals under the naive, C-UMAU, and C-TOST approaches. Across the full population and both subgroups, the C-UMAU and C-TOST intervals are consistently wider than the naive intervals, primarily due to lower bounds being reduced while upper bounds remain similar. The naive confidence interval ignores the interim decision and treats the data as arising from a fixed design, leading to narrower intervals. In contrast, C-UMAU and C-TOST condition on the adaptive selection and therefore adjust for the adaptive decision mechanism, resulting in wider and lower intervals. In addition. the C-UMAU intervals are marginally wider than the C-TOST intervals.

\section{Simulation Study} \label{sec:sim}

To evaluate the conditional coverage of the C-UMAU confidence interval and the associated width inflation required to achieve conditional coverage, we conduct the simulation using adaptive enrichment designs with interim decision rules described in \cref{sec:designs}.

\subsection{Simulation Results for the Adaptive Enrichment Design in \citet{rosenblum2013confidence}} \label{sec:sim1}
We first consider the adaptive enrichment design described in Section 6 of \citet{rosenblum2013confidence}, which evaluated a new antidepressant. Based on the meta-analysis of \citet{kirsch2008initial}, such a treatment may benefit only patients with severe depression but not those with moderate depression. We assume $\mathcal{S}_1$ consists of patients with severe baseline depression and $\mathcal{S}_2$ of those with moderate depression, with $\mathcal{K} = \{1,2\}$. Patient enrollment follows the scheme described in \cref{sec:notation}. The outcome is the change in the Hamilton Rating Scale for Depression (HRSD) score from baseline to the final visit. Based on \citet{kirsch2008initial}, HRSD outcomes are assumed normally distributed with known standard deviation $\sigma = 8$ HRSD points. Decision rule $D_1$ described in \cref{sec:design1} is used for the interim analysis.

We evaluate the performance of the C-UMAU confidence interval, the C-TOST confidence interval, the confidence interval procedure in \citet{rosenblum2013confidence}, and the naive confidence interval. The confidence interval procedure of \citet{rosenblum2013confidence} asymptotically guarantees the overall (i.e., unconditional) coverage, which is given by
\begin{align*}
	\Pr\!\big(\Delta_{\mathcal{K}^{(2)}}  \in  \text{CI}(\Delta_{\mathcal{K}^{(2)}}) \big) = & \Pr\!\big(D_1(\mathcal{X}^{(1)}) = \{1,2\} \big) \Pr\!\big( \Delta_{\mathcal{K}} \in \text{CI}(\Delta_{\mathcal{K}}) \,\big|\, D_1(\mathcal{X}^{(1)}) = \{1,2\} \big) \\
	& + \Pr\!\big(D_1(\mathcal{X}^{(1)}) = \{1\} \big) \Pr\!\big( \Delta_{1} \in \text{CI}(\Delta_{1}) \,\big|\, D_1(\mathcal{X}^{(1)}) = \{1\} \big) \\
	& + \Pr\!\big(D_1(\mathcal{X}^{(1)}) = \{2\} \big) \Pr\!\big( \Delta_{2} \in \text{CI}(\Delta_{2}) \,\big|\, D_1(\mathcal{X}^{(1)}) = \{2\} \big).
\end{align*}
The construction of the C-UMAU confidence intervals for this design is described in Appendix B.1.

We use the same parameter setting as in \citet{rosenblum2013confidence}. Three hypothetical scenarios are considered. In scenario 1, suppose the antidepressant benefits the full population equally such that $\Delta_1 = \Delta_2 = 1.8$. In scenario 2, we assume the antidepressant only benefits subpopulation 1 with $\Delta_1 = 1.8$ and $\Delta_2 = 0$. For scenario 3, there is no benefit  in either subpopulation, that is, $\Delta_1 = \Delta_2 = 0$. As in \citet{rosenblum2013confidence}, we set $Z_*=  1$ in $D_1$. The sample size is set as $n=488$ $(n_1 = n_2 = 244)$ and the proportion of each subpopulation $p_1 = p_2 = 0.5$. In each scenario, for subpopulation $\mathcal{S}_m$ ($m \in {1,2}$), the mean outcome in the control arm is set to $\mu_{m0}=0$, and the mean in the treatment arm to $\mu_{m1} = \Delta_m$.

\begin{table}[!htbp]
	\centering
	\begin{tabular}{l c c c c c}
		\hline
		Scenario & Method & Continue with $\mathcal{F}$ & Enrich $\mathcal{S}_1 $ & Enrich to $\mathcal{S}_2 $ & Overall \\ \hline
		\multirow{5}{*}{\makecell[l]{Scenario 1: \\ $\Delta_1 = 1.8$  \\ $\Delta_2 = 1.8$ }} & Decision proportion & $77.35\%$ & $11.41\%$ & $11.25\%$ & - \\
		& Naive & $96.38\%$ & $97.09\%$ & $96.97\% $ & $96.53\%$   \\
		& \citet{rosenblum2013confidence} & $96.88\% \ (1.03)$ & $97.50\% \ (1.03)$ & $97.45\% \ (1.03) $ & $97.02\% \ (1.03) $ \\
		& C-UMAU & $\textbf{95.01\%} \ (1.14)$ & $\textbf{95.06\%} \ (1.16)$ & $\textbf{95.04\%} \ (1.16)$ & $\textbf{95.02\%} \ (1.15)$   \\
		& C-TOST & $\textbf{94.99\%} \ (1.14)$ & $\textbf{95.03\%} \ (1.16)$ & $\textbf{95.03\%} \ (1.16)$ & $\textbf{95.00\%} \ (1.15)$   \\ \hline
		\multirow{5}{*}{\makecell[l]{Scenario 2: \\ $\Delta_1 = 1.8$  \\ $\Delta_2 = 0$ }} & Decision proportion & $45.16\%$ & $44.43\%$ & $10.41\%$ & - \\
		& Naive & $94.36\% $ & $96.66\% $ & $96.60\%$ & $95.62\% $   \\
		& \citet{rosenblum2013confidence} & $\textbf{95.11\%} \ (1.03) $ & $97.16\% \ (1.03) $ & $97.03\% \ (1.03) $ & $96.22\% \ (1.03) $ \\
		& C-UMAU & $\textbf{95.13\%}  \ (1.21) $ & $\textbf{95.02\%} \ (1.12) $ & $\textbf{94.76\%} \ (1.16) $ &$\textbf{95.05\%} \ (1.16) $ \\
		& C-TOST & $\textbf{95.10\%}  \ (1.21) $ & $\textbf{95.02\%} \ (1.11) $ & $\textbf{94.77\%} \ (1.16) $ &$\textbf{95.03\%} \ (1.16) $ \\ \hline
		\multirow{5}{*}{\makecell[l]{Scenario 3: \\ $\Delta_1 = 0$  \\ $\Delta_2 = 0$ }} & Decision proportion & $15.90\%$ & $42.00\%$ & $42.10\%$ & - \\
		& Naive & $87.76\% $ & $96.37\%$ & $96.26\% $ & $\textbf{94.95\%} $   \\
		& \citet{rosenblum2013confidence} & $89.07\% \ (1.03) $ & $96.87\% \ (1.03) $ & $96.84\% \ (1.03) $ & $95.62\% \ (1.03) $ \\
		& C-UMAU & $\textbf{95.03\%} \ (1.27)$ & $\textbf{94.95\%} \ (1.12)$ & $ \textbf{94.96\%} \ (1.12)$  & $\textbf{94.97\%} \ (1.14)$ \\
		& C-TOST & $\textbf{95.04\%} \ (1.28)$ & $\textbf{94.96\%} \ (1.12)$ & $ \textbf{94.96\%} \ (1.12)$  & $\textbf{94.97\%} \ (1.14)$ \\ \hline
	\end{tabular}
	\caption{Simulated interim decision proportions, coverage probabilities, and relative average interval widths under three scenarios for the design described in \cref{sec:sim1}. Boldface coverage values indicate that the empirical coverage lies within the 95\% Monte Carlo error band around the nominal 95\% level. Values in parentheses denote the ratio of the average interval width to that of the naive confidence interval.}
	\label{tab:width}
\end{table}

Under each scenario, we conduct 100,000 simulated trials and compute the corresponding $95\%$ confidence intervals and their widths. The proportions of interim decisions, the conditional coverage probabilities of the considered confidence interval methods given each interim decision, as well as the ratios of their average widths relative to the naive confidence interval, are summarized in \cref{tab:width}. As described in \cref{sec:decision}, the C-UMAU guarantees a conditional coverage probability of $95\%$, which also ensures the overall coverage to be $95\%$ in this design. The simulation results confirm this property, with both the conditional and overall coverage probabilities of the C-UMAU confidence intervals lying within the $95\%$ Monte Carlo error band around the nominal $95\%$ level under all three scenarios. The average width ratio between the C-UMAU and naive intervals ranges from 1.12 to 1.27, indicating that an additional $12\%-27\%$ increase in width is required to guarantee conditional $95\%$ coverage. For comparison, the C-TOST confidence interval also attains conditional and overall coverage close to the nominal $95\%$ level, with a degree of width inflation similar to that observed for the C-UMAU confidence interval. By construction, the confidence interval of \citet{rosenblum2013confidence} guarantees overall coverage of at least $95\%$ asymptotically, but not conditional coverage for each interim decision. The average width ratio between this interval and the naive one remained fixed at 1.03. Notably, the overall coverage probabilities of the naive confidence interval were also close to or slightly above $95\%$ under the three scenarios. Thus, in terms of overall coverage, the improvement offered by the interval of \citet{rosenblum2013confidence} over the naive confidence interval is limited in this simulation study.

For the proportions of interim decisions, in Scenario 1 the design continued with the full population in a substantial proportion of trials $(77.35\%)$. When the treatment benefited only subpopulation 1, as in Scenario 2, the design enriched to subpopulation 1 in $44.43\%$ of trials. Without a stopping for futility option, an inherent limitation of $D_1$ is that the trial always continues in some form, regardless of how unfavorable the stage 1 results are, as illustrated in Scenario 3.

\subsection{Simulation Results for an Adaptive Enrichment Design with Stopping for futility and Co-primary Analysis} \label{sec:sim2}
We modify the adaptive enrichment design proposed in \citet{rosenblum2013confidence} by using interim decision rules $D_2$ described in \cref{sec:design2}, where a stopping for futility is incorporated. We also consider the co-primary analysis as described in \cref{sec:co-primary} such that the confidence intervals for treatment effects in $\mathcal{S}_1$ and $\mathcal{S}_2$ are calculated when the trial continues with $\mathcal{F}$. 

We consider the same settings and scenarios as in the previous section. We set $\Delta_* = 1$ in $D_2$, which yields reasonable decisions, as we show below. Extending the method of \citet{rosenblum2013confidence} to this design is not straightforward; therefore, we evaluate the performance of the proposed C-UMAU confidence interval, the C-TOST confidence interval, and the naive confidence interval. Additional details on constructing the C-UMAU confidence intervals for this design are given in Appendix B.2. We report the same metrics as in the previous section in \cref{tab:coverage}. The last column of \cref{tab:coverage} reports the overall coverage probability and the average width ratio for all simulated trials that continue to stage 2, regardless of the selected population.

Under all three scenarios, both the conditional and overall coverage probabilities of the C-UMAU confidence intervals fall within the $95\%$ Monte Carlo error band around the nominal $95\%$ coverage level. As in the previous simulation setting, the C-TOST confidence interval shows similar empirical performance with respect to coverage and interval width. The naive confidence interval generally attains at least $95\%$ coverage when the true treatment effect is substantial, but may fail to reach the nominal level when the effect is weak or absent as in Scenario 3. Given each interim decision, the width inflation of the C-UMAU intervals relative to the naive intervals ranges from $14\%$ to $27\%$, and the overall width inflation under the three scenarios lies between $15\%$ and $22\%$.

In Scenario 1, the design continued with the full population in a large proportion of trials $(78.29\%)$. In Scenario 2, an appreciable fraction $(28.00\%)$ of trials enriched to subpopulation 1, where the treatment effect was more pronounced. In Scenario 3, where neither subpopulation benefited, $57.26\%$ of trials stopped for futility at the interim. Compared with $D_1$, the interim decision rule $D_2$ can effectively terminate the trial when treatment effects in both subpopulations are absent.

For co-primary analysis, when the trial continues with the full population, we also evaluate the C-UMAU and the C-TOST confidence intervals for treatment effects in both $\mathcal{S}_1$ and $\mathcal{S}_2$. The simulation results in \cref{tab:co-primary} demonstrate that the conditional coverage of the C-UMAU and the C-TOST confidence intervals for $\Delta_{1}$ and $\Delta_{2}$ is maintained at the nominal $95\%$ level, with all simulated coverage probabilities lying within simulation error of $95\%$. In contrast, the naive confidence interval fails to achieve the nominal level in scenario 3 where the treatment effect is absent. The width inflation of the C-UMAU and C-TOST intervals in this co-primary analysis ranges from $10\%$ to $23\%$.

\clearpage
\begin{sidewaystable}
	\centering
	\begin{tabular}{l c c c c c c}
		\hline
		Scenario & Method & Continue with $\mathcal{F}$ & Enrich $\mathcal{S}_1 $ & Enrich to $\mathcal{S}_2 $ & Stop &Continuation \\ \hline
		\multirow{5}{*}{\makecell[l]{Scenario 1: \\ $\Delta_1 = 1.8$  \\ $\Delta_2 = 1.8$ }} & Decision proportion & $78.29\%$ & $6.70\%$ & $6.66\%$ & $4.09\%$ & $91.65\%$ \\
		& Naive & $96.42\%$ & $98.10\%$ & $97.99\% $ & $-$ & $96.66\%$   \\
		& C-UMAU & $\textbf{95.02\%} \ (1.14)$ & $\textbf{94.99\%} \ (1.20)$ & $\textbf{95.23\%} \ (1.20)$ & $-$ & $\textbf{95.03\%} \ (1.15)$   \\
		& C-TOST & $\textbf{95.01\%} \ (1.14)$ & $\textbf{94.99\%} \ (1.20)$ & $\textbf{95.23\%} \ (1.20)$ & $-$ & $\textbf{95.03\%} \ (1.15)$   \\ \hline
		\multirow{5}{*}{\makecell[l]{Scenario 2: \\ $\Delta_1 = 1.8$  \\ $\Delta_2 = 0$ }} & Decision proportion & $46.46\%$ & $28.00\%$ & $3.73\%$ & $21.81\%$ & $78.19\%$ \\
		& Naive & $94.69\%$ & $97.54\%$ & $94.17\% $ & $-$ & $95.69\%$   \\
		& C-UMAU & $\textbf{94.97\%} \ (1.21)$ & $\textbf{94.90\%} \ (1.18)$ & $\textbf{95.06\%} \ (1.21)$ & $-$ & $\textbf{94.95\%} \ (1.20)$   \\
		& C-TOST & $\textbf{94.97\%} \ (1.21)$ & $\textbf{94.91\%} \ (1.18)$ & $\textbf{95.06\%} \ (1.20)$ & $-$ & $\textbf{94.96\%} \ (1.20)$   \\ \hline
		\multirow{5}{*}{\makecell[l]{Scenario 3: \\ $\Delta_1 = 0$  \\ $\Delta_2 = 0$ }} & Decision proportion & $16.34\%$ & $13.07\%$ & $13.33\%$ & $57.26\%$ & $42.74\%$ \\
		& Naive & $87.06\%$ & $93.25\%$ & $93.92\% $ & $-$ & $91.10\%$   \\
		& C-UMAU & $\textbf{94.86\%} \ (1.27)$ & $\textbf{94.67\%} \ (1.19)$ & $\textbf{95.22\%} \ (1.19)$ & $-$ & $\textbf{94.92\%} \ (1.22)$   \\
		& C-TOST & $\textbf{94.90\%} \ (1.27)$ & $\textbf{94.68\%} \ (1.19)$ & $\textbf{95.21\%} \ (1.19)$ & $-$ & $\textbf{94.98\%} \ (1.22)$   \\ \hline
	\end{tabular}
	\caption{Simulated interim decision proportions, coverage probabilities, and relative average interval widths under three scenarios for the design described in \cref{sec:sim2}. Boldface coverage values indicate that the empirical coverage lies within the 95\% Monte Carlo error band around the nominal 95\% level. Values in parentheses denote the ratio of the average interval width to that of the naive confidence interval.}
	\label{tab:coverage}
\end{sidewaystable} 

\clearpage

\begin{table}[!htbp]
	\centering
	\begin{tabular}{c c c c}
		\hline
		Scenario & Method & CI for $\Delta_1$ in the co-primary option & CI for $\Delta_2$ in the co-primary option \\ \hline
		Scenario 1: & Naive & $95.68\%$ & $95.79\%$  \\
		$\Delta_1 = 0.5$ & C-UMAU & $\textbf{94.92\%} \ (1.10)$ & $\textbf{95.01\%} \ (1.10)$ \\
		$\Delta_2 = 0.5$ & C-TOST & $\textbf{94.91\%} \ (1.10)$ & $\textbf{95.02\%} \ (1.10)$ \\ \hline
		Scenario 2: & Naive & $94.80\%$ & $\textbf{94.81\%}$ \\
		$\Delta_1 = 0.5$ & C-UMAU & $\textbf{94.97\%}  \ (1.17)$ & $\textbf{94.96\%}  \ (1.17)$   \\
		$\Delta_2 = 0.2$ & C-TOST & $\textbf{95.00\%}  \ (1.17)$ & $\textbf{94.98\%}  \ (1.17)$ \\ \hline
		Scenario 3: & Naive & $90.59\% $ & $90.92\% $  \\
		$\Delta_1 = 0.5$ & C-UMAU & $\textbf{94.75\%} \ (1.22) $ & $\textbf{94.91\%} \ (1.22)$   \\
		$\Delta_2 = 0$ & C-TOST & $\textbf{94.71\%} \ (1.23) $ & $\textbf{94.92\%} \ (1.22)$   \\ \hline
	\end{tabular}
	\caption{Simulated coverage probabilities and relative interval widths in the co-primary analysis under three scenarios for the design described in \cref{sec:sim2}. Boldface coverage values indicate that the empirical coverage lies within the 95\% Monte Carlo error band around the nominal 95\% level. Values in parentheses denote the ratio of the average interval width to that of the naive confidence interval.}
	\label{tab:co-primary}
\end{table}

\section{Discussion} \label{sec:discuss}
We have developed new methods for constructing confidence intervals that guarantees $100(1-\alpha)\%$ conditional coverage given each possible interim decision, for a broad class of two-stage adaptive enrichment designs which allow stopping for futility and co-primary analysis. Our simulation study confirms that the proposed confidence intervals achieve the nominal conditional coverage in finite samples and quantifies the increase in interval width required to attain this stronger form of coverage control.

For clarity of exposition, the class $\mathcal{D}$ is defined in terms of constraints on the stage 1 estimator. The proposed conditional inference framework, however, also applies to certain decision rules involving auxiliary stage 1 quantities, provided that the resulting selection event can be explicitly characterised. Details are provided in Appendix C.

The C-UMAU confidence interval is our primary proposal, as it uniquely achieves exact conditional coverage together with the uniformly most accurate unbiased property. For completeness, we also consider a simpler alternative based on inverting two conditional one-sided tests, referred to as the C-TOST confidence interval. While the C-TOST confidence interval has correct conditional coverage and is computationally simpler, it sacrifices the uniformly most accurate unbiased property and is therefore theoretically suboptimal. In simulation studies, confidence intervals are typically evaluated in terms of coverage probability and average length. These summary measures do not directly reflect the uniformly most accurate and unbiased properties that define the optimality of the C-UMAU confidence interval. As a result, the C-UMAU and C-TOST confidence intervals may exhibit similar empirical performance with respect to coverage and length, despite the stronger theoretical guarantees of the C-UMAU construction. In settings where computational simplicity is a primary concern and optimality is not critical, the C-TOST confidence interval may provide a useful alternative.

The current methodology assumes normally distributed outcomes. For non-normal outcomes, the theoretical results hold asymptotically, provided that standard regularity conditions ensure the asymptotic normality of the treatment effect estimator as the stage-wise sample sizes tend to infinity (e.g. by the central limit theorem). Extending the method to adaptive designs with more than two stages is a natural direction for future research, but remains technically challenging and is left as an open problem.

In this paper, we have focused on individual coverage for a single parameter rather than simultaneous coverage for a set of parameters. For unconditional \(100(1-\alpha)\%\) simultaneous coverage in adaptive designs, one may refer to \citet{magirr2013simultaneous}. If conditional simultaneous coverage is desired, our method could in principle be adapted by applying multiplicity corrections, such as the Bonferroni or \v{S}id\'ak adjustments \citep{bonferroni1936teoria, hochberg1987multiple}. However, the resulting intervals would no longer be uniformly most accurate unbiased and may become conservative. We also restrict attention to conditional coverage. A detailed comparison of conditional and unconditional inference lies beyond the scope of this paper; see \citet{marschner2021general} for a comprehensive discussion.

\section*{Acknowledgments}
This work is funded by the MRC Doctoral Training Partnership in Interdisciplinary Biomedical Research awarded to the University of Warwick (MR/W007053/1) and Novartis.

\newpage
\bibliographystyle{apalike}
\bibliography{C-UMAU-CI}

\appendix

\newpage
\section*{Appendix}
\subsection*{A: Theorems and Proofs} 
\subsubsection*{A.1. Proof of Theorem 1 from Section 4.1} \label{A1}
\proof
For Theorem 1(i), we first consider the conditional distribution of $(\hat{\Delta}^{(1)}, \hat{\Delta}^{(2)})$ given $l<\hat{\Delta}^{(1)}<u$, which is 
\begin{equation*}
	f(\hat{\Delta}^{(1)}, \hat{\Delta}^{(2)} \mid l<\hat{\Delta}^{(1)}<u) = \frac{1}{\sigma_{(1)}}\phi (\frac{\hat{\Delta}^{(1)} - \Delta}{\sigma_{(1)}}) \frac{1}{\Phi(\frac{u - \Delta}{\sigma_{(1)}}) - \Phi(\frac{l - \Delta}{\sigma_{(1)}})}  \frac{1}{\sigma_{(2)}}\phi (\frac{\hat{\Delta}^{(2)} - \Delta}{\sigma_{(2)}}).
\end{equation*}
Part of the expression above can be written as
\begin{align*}
	& \phi (\frac{\hat{\Delta}^{(1)} - \Delta}{\sigma_{(1)}}) \phi (\frac{\hat{\Delta}^{(2)} - \Delta}{\sigma_{(2)}}) \\
	= & \phi(\frac{ \frac{\tau_{(1)}}{\tau_{(1)} + \tau_{(2)}}\hat{\Delta}^{(1)} + \frac{\tau_{(2)}}{\tau_{(1)} + \tau_{(2)}}\hat{\Delta}^{(2)} - \Delta}{\sigma_{(1)} \sigma_{(2)}/ \sqrt{\sigma_{(1)}^2 + \sigma_{(2)}^2}})  \phi(\frac{\hat{\Delta}^{(1)} - (\frac{\tau_{(1)}}{\tau_{(1)} + \tau_{(2)}}\hat{\Delta}^{(1)} + \frac{\tau_{(2)}}{\tau_{(1)} + \tau_{(2)}}\hat{\Delta}^{(2)})}{\sigma_{(1)}^2/ \sqrt{\sigma_{(1)}^2 + \sigma_{(2)}^2}})
\end{align*}
Let $\hat{\Delta}=\tau_{(1)}/(\tau_{(1)} + \tau_{(2)})\hat{\Delta}^{(1)} + \tau_{(2)}/(\tau_{(1)} + \tau_{(2)})\hat{\Delta}^{(2)}$ and $\sigma_{(12)}^2 = \sigma_{(1)}^2 \sigma_{(2)}^2/ (\sigma_{(1)}^2 + \sigma_{(2)}^2)$. We can obtain the conditional distribution of $(\hat{\Delta}^{(1)}, \hat{\Delta})$ given $l<\hat{\Delta}^{(1)}<u$ by transforming $(\hat{\Delta}^{(1)}, \hat{\Delta}^{(2)})$ to $(\hat{\Delta}^{(1)}, \hat{\Delta})$ as
\begin{equation*}
	f(\hat{\Delta}^{(1)}, \hat{\Delta} \mid l<\hat{\Delta}^{(1)}<u) = \frac{1}{\sigma_{(12)}} \phi(\frac{\hat{\Delta} - \Delta}{\sigma_{(12)}})  \frac{1}{(\sigma_{(1)}/\sigma_{(2)})\sigma_{(12)}}  \phi(\frac{\hat{\Delta}^{(1)}- \hat{\Delta}}{(\sigma_{(1)}/\sigma_{(2)})\sigma_{(12)}})   \frac{1}{\Phi(\frac{u - \Delta}{\sigma_{(1)}}) - \Phi(\frac{l - \Delta}{\sigma_{(1)}})}.
\end{equation*}
The conditional distribution of $\hat{\Delta}$ given $l<\hat{\Delta}^{(1)}<u$ can be obtained by integrating out $\hat{\Delta}^{(1)}$ in the above equation as
\begin{align*}
	f(\hat{\Delta} \mid l<\hat{\Delta}^{(1)}<u) & = \int_{l}^{u} f(\hat{\Delta}^{(1)}, \hat{\Delta} \mid l<\hat{\Delta}^{(1)}<u) \ \mathrm{d} \hat{\Delta}^{(1)} \\
	& = \frac{1}{\sigma_{(12)}} \phi(\frac{\hat{\Delta} - \Delta}{\sigma_{(12)}}) \frac{\Phi(\frac{u - \hat{\Delta}}{(\sigma_{(1)}/\sigma_{(2)}) \sigma_{(12)}}) - \Phi(\frac{l - \hat{\Delta}}{(\sigma_{(1)}/\sigma_{(2)}) \sigma_{(12)}})}{\Phi(\frac{u - \Delta}{\sigma_{(1)}}) - \Phi(\frac{l - \Delta}{\sigma_{(1)}})}.
\end{align*}
The conditional distribution of $\hat{\Delta}$ belongs to the one-parameter exponential family. By the factorization theorem, $\hat{\Delta}$ is a sufficient statistic for $\Delta$ within this conditional model.

For Theorem 1(ii), the conditional distribution $f(\hat{\Delta} \mid l<\hat{\Delta}^{(1)}<u)$ follows the one-parameter exponential family, the construction of the two-sided uniformly most powerful unbiased test for the one-parameter exponential family is given in Section 4.2 in \citet{lehmann1986testing}.

For Theorem 1(iii), it is straightforward to verify $f(\hat{\Delta} \mid l<\hat{\Delta}^{(1)}<u)$ has monotone likelihood ratio in $\hat{\Delta}$. The expression of the uniformly most accurate unbiased confidence interval is given in Lemma 5.5.1 in \citet{lehmann1986testing}.
\endproof

\subsubsection*{A.2. Proof of Proposition 1 from Section 4.1} \label{A3}
\begin{proof}
	For Proposition 1(i), since $C_1$ and $C_2$ are strictly increasing in $\Delta_0$ (see Lemma 5.5.1 in \citet{lehmann1986testing}), we have
	\begin{align*}
		& \Pr \big(\Delta \in (\underline{\Delta},\ \overline{\Delta}) \mid l < \hat{\Delta}^{(1)} < u \big) \\ = & \Pr \big( C_1(\Delta) < \hat{\Delta} < C_2(\Delta) \mid l < \hat{\Delta}^{(1)} < u \big) = 1- \mathbb{E}_{\Delta}(\psi_{\Delta}( \hat{\Delta})) = 1 - \alpha.
	\end{align*}
	Proposition 1(ii) and Proposition 1(iii) follow the definition of the uniformly most accurate unbiased confidence interval in Section 5.5 in \citet{lehmann1986testing}.
\end{proof}

\subsubsection*{A.3. Corollary 1} \label{A5}

\begin{corollary}\label{corollary}
	Suppose $\hat{\Delta}$ is distributed as in Theorem 1(i). Then, its expectation is given by
	\begin{equation*}
		\mathbb{E}_{\Delta}( \hat{\Delta}) = \Delta + \frac{\phi(\frac{l-\Delta}{\sigma_{(1)}}) - \phi(\frac{u-\Delta}{\sigma_{(1)}})}{\Phi(\frac{u-\Delta}{\sigma_{(1)}}) - \Phi(\frac{l-\Delta}{\sigma_{(1)}})} \frac{\sigma_{(1)} \tau_{(1)}}{\tau_{(1)} + \tau_{(2)}}.
	\end{equation*} 
\end{corollary}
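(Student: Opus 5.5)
The plan is to avoid integrating the density of Theorem 1(i) directly and instead use linearity of conditional expectation on the original two-stage representation. Since $\hat{\Delta}=\frac{\tau_{(1)}}{\tau_{(1)}+\tau_{(2)}}\hat{\Delta}^{(1)}+\frac{\tau_{(2)}}{\tau_{(1)}+\tau_{(2)}}\hat{\Delta}^{(2)}$, we have
\begin{equation*}
\mathbb{E}_{\Delta}(\hat{\Delta}\mid l<\hat{\Delta}^{(1)}<u)=\frac{\tau_{(1)}}{\tau_{(1)}+\tau_{(2)}}\,\mathbb{E}_{\Delta}(\hat{\Delta}^{(1)}\mid l<\hat{\Delta}^{(1)}<u)+\frac{\tau_{(2)}}{\tau_{(1)}+\tau_{(2)}}\,\mathbb{E}_{\Delta}(\hat{\Delta}^{(2)}\mid l<\hat{\Delta}^{(1)}<u).
\end{equation*}

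The second conditional expectation is simply $\Delta$. This uses the independence of $\hat{\Delta}^{(1)}$ and $\hat{\Delta}^{(2)}$, which is implicit in Theorem 1: the joint density in the proof of Theorem 1(i) factorises into the truncated stage 1 density times the untruncated stage 2 density.

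The first term is the mean of a doubly truncated normal $\mathcal{N}(\Delta,\sigma_{(1)}^2)$ restricted to $(l,u)$. Substitute $z=(x-\Delta)/\sigma_{(1)}$ and use $\int z\phi(z)\,dz=-\phi(z)$. This gives
\begin{equation*}
\mathbb{E}_{\Delta}(\hat{\Delta}^{(1)}\mid l<\hat{\Delta}^{(1)}<u)=\Delta+\sigma_{(1)}\frac{\phi(\frac{l-\Delta}{\sigma_{(1)}})-\phi(\frac{u-\Delta}{\sigma_{(1)}})}{\Phi(\frac{u-\Delta}{\sigma_{(1)}})-\Phi(\frac{l-\Delta}{\sigma_{(1)}})}.
\end{equation*}
When $l$ or $u$ is infinite, the convention $\phi(\pm\infty)=0$ and $\Phi(-\infty)=0$, $\Phi(+\infty)=1$ applies, and the denominator stays positive whenever $l<u$.

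Combining the two pieces, the weights sum to one, so the $\Delta$ terms collapse to a single $\Delta$. The correction term picks up the factor $\sigma_{(1)}\tau_{(1)}/(\tau_{(1)}+\tau_{(2)})$, which is exactly the stated formula.

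The only step that needs care is justifying that conditioning on the event $\{l<\hat{\Delta}^{(1)}<u\}$ leaves the law of $\hat{\Delta}^{(2)}$ unchanged. This follows directly from the product form of the conditional joint density displayed in A.1. As a cross-check, one could instead differentiate the log normalising constant of the exponential-family form of Theorem 1(i) with respect to $\Delta$. I would not rely on that route, because the denominator there mixes $\sigma_{(1)}$ and $\sigma_{(12)}$ scalings.
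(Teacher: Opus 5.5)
Your proof is correct and follows the paper's own argument: split $\hat{\Delta}$ linearly, use the doubly truncated normal mean for $\hat{\Delta}^{(1)}$, and use the unconditional mean $\Delta$ for $\hat{\Delta}^{(2)}$. You rightly treat $\hat{\Delta}^{(2)}$ as untruncated, whereas the paper's text loosely calls it truncated; also, the exponential-family cross-check you set aside does work, since $\sigma_{(12)}^2/\sigma_{(1)} = \sigma_{(1)}\tau_{(1)}/(\tau_{(1)}+\tau_{(2)})$.
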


\begin{proof}
	\begin{align*}
		\mathbb{E}_{\Delta}( \hat{\Delta}) & = \mathbb{E}_{\Delta}(\frac{\tau_{(1)}}{\tau_{(1)} + \tau_{(2)}}\hat{\Delta}^{(1)} + \frac{\tau_{(2)}}{\tau_{(1)} + \tau_{(2)}}\hat{\Delta}^{(2)}) =  \frac{\tau_{(1)}}{\tau_{(1)} + \tau_{(2)}} \mathbb{E}_{\Delta}(\hat{\Delta}^{(1)}) + \frac{\tau_{(2)}}{\tau_{(1)} + \tau_{(2)}} \mathbb{E}_{\Delta}(\hat{\Delta}^{(2)}),
	\end{align*} 
	where $\hat{\Delta}^{(1)}$ is truncated normal distribution with mean $\Delta$ and standard deviation $\sigma_{(1)}$ and lies within the interval $(l,u)$, and $\hat{\Delta}^{(2)}$ is truncated normal distribution with mean $\Delta$ and standard deviation $\sigma_{(2)}$.
	Since
	\begin{equation*}
		\mathbb{E}_{\Delta}(\hat{\Delta}^{(1)}) = \Delta + \frac{\phi(\frac{l-\Delta}{\sigma_{(1)}}) - \phi(\frac{u-\Delta}{\sigma_{(1)}})}{\Phi(\frac{u-\Delta}{\sigma_{(1)}}) - \Phi(\frac{l-\Delta}{\sigma_{(1)}})} \sigma_{(1)}, \quad \mathbb{E}_{\Delta}(\hat{\Delta}^{(2)}) = \Delta, 
	\end{equation*}
	we have
	\begin{equation*}
		\mathbb{E}_{\Delta}( \hat{\Delta}) = \Delta + \frac{\phi(\frac{l-\Delta}{\sigma_{(1)}}) - \phi(\frac{u-\Delta}{\sigma_{(1)}})}{\Phi(\frac{u-\Delta}{\sigma_{(1)}}) - \Phi(\frac{l-\Delta}{\sigma_{(1)}})} \frac{\sigma_{(1)} \tau_{(1)}}{\tau_{(1)} + \tau_{(2)}}.
	\end{equation*} 
\end{proof}

\subsubsection*{A.4. Lemma 1} \label{A4}

\begin{lemma}\label{lemma}
	Let $f(\cdot)$ and $F(\cdot)$ denote the probability density function and cumulative distribution function of the distribution in Theorem 1(i). 
	For any $c_1$ such that $F(c_1) < \alpha$ $(0<\alpha<1/2)$, define 
	\begin{equation*}
		c_2(c_1) = F^{-1}\bigl(F(c_1) + 1 - \alpha\bigr).
	\end{equation*}
	Then the function
	\begin{equation*}
		I(c_1) = \int_{c_1}^{c_2(c_1)} t f(t)\, dt
	\end{equation*}
	is continuous and strictly increasing in $c_1$.
\end{lemma}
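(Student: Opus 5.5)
The plan is to show that $I$ is in fact differentiable in $c_1$ with a strictly positive derivative. Continuity then follows at once, and strict monotonicity follows from the mean value theorem. The argument only uses regularity properties of the conditional density $f$ from Theorem~1(i), with $\Delta$ held fixed throughout.

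\textbf{Step 1 (properties of $f$ and $F$).} From the formula in Theorem~1(i), $f$ is the normal density $\sigma_{(12)}^{-1}\phi((t-\Delta)/\sigma_{(12)})$ multiplied by the factor
\[
\frac{\Phi\bigl(\frac{u-t}{(\sigma_{(1)}/\sigma_{(2)})\sigma_{(12)}}\bigr)-\Phi\bigl(\frac{l-t}{(\sigma_{(1)}/\sigma_{(2)})\sigma_{(12)}}\bigr)}{\Phi\bigl(\frac{u-\Delta}{\sigma_{(1)}}\bigr)-\Phi\bigl(\frac{l-\Delta}{\sigma_{(1)}}\bigr)}.
\]
We use three facts about this factor.
\begin{itemize}
\item It is continuous in $t$.
\item It is strictly positive for every $t\in\mathbb{R}$, because $l<u$ and $\Phi$ is strictly increasing.
\item It is bounded above by a constant, since its numerator is at most $1$ and its denominator does not depend on $t$.
\end{itemize}
Consequently $f$ is continuous and strictly positive on all of $\mathbb{R}$. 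Also $|t|f(t)$ is integrable, being bounded by a constant times $|t|$ times a normal density; so the integral defining $I$ is well defined. Finally, $F$ is a $C^1$ bijection from $\mathbb{R}$ onto $(0,1)$ with $F'=f>0$.

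\textbf{Step 2 (the map $c_1\mapsto c_2(c_1)$).} The condition $F(c_1)<\alpha$ gives $F(c_1)+1-\alpha\in(1-\alpha,1)$, so $c_2(c_1)$ is a finite real number. By the inverse function theorem, $F^{-1}$ is $C^1$ on $(0,1)$ with derivative $1/f(F^{-1}(\cdot))$. Differentiating $F(c_2(c_1))=F(c_1)+1-\alpha$ then gives
\[
c_2'(c_1)=\frac{f(c_1)}{f(c_2(c_1))}>0 .
\]
Moreover $F(c_2)-F(c_1)=1-\alpha>0$, and since $F$ is strictly increasing this forces $c_2(c_1)>c_1$.

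\textbf{Step 3 (differentiating $I$).} The integrand $t f(t)$ is continuous, and both limits of integration are $C^1$ in $c_1$. The fundamental theorem of calculus together with the chain rule therefore gives
\[
I'(c_1)=c_2 f(c_2)\,c_2'(c_1)-c_1 f(c_1)=f(c_1)\bigl(c_2(c_1)-c_1\bigr).
\]
Both factors are strictly positive by Steps 1 and 2, so $I'(c_1)>0$ on the whole admissible range $\{c_1:F(c_1)<\alpha\}$. This range is the interval $(-\infty,F^{-1}(\alpha))$, which is connected. Hence $I$ is continuous there, being differentiable, and strictly increasing by the mean value theorem.

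\textbf{Main obstacle.} The only delicate point is the regularity in Step 1. Everything rests on $f$ being continuous and strictly positive everywhere, including when $l=-\infty$ or $u=+\infty$. In those cases the corresponding $\Phi$ term is simply $0$ or $1$, and the same argument goes through unchanged. Once this is established, the derivative computation in Step 3 is routine.
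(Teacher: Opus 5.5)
Your proof is correct, and it takes a different route from the paper's.

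\textbf{How the paper argues.} The paper proves the two properties separately and never differentiates $I$.
\begin{enumerate}
\item Continuity comes from an $\epsilon$--$\delta$ argument. It bounds $|I(c_1')-I(c_1)|$ by $M\bigl(|c_1'-c_1|+|c_2(c_1')-c_2(c_1)|\bigr)$, using boundedness of $t f(t)$ and the continuity of $c_2$. The paper calls that continuity ``straightforward'' and does not prove it.
\item Strict monotonicity comes from a finite mass-shifting comparison. For $c_1<c_1'$, replacing $[c_1,c_2(c_1)]$ by $[c_1',c_2(c_1')]$ removes the mass on $[c_1,c_1']$. It adds an equal mass on $[c_2(c_1),c_2(c_1')]$, which lies entirely to the right. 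So the first moment increases by at least $(c_2(c_1)-c_1')\int_{c_1}^{c_1'} f>0$.
\end{enumerate}

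\textbf{How your route compares.} Your identity $I'(c_1)=f(c_1)\,\bigl(c_2(c_1)-c_1\bigr)$ is exactly the infinitesimal version of that comparison.

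What it buys:
\begin{enumerate}
\item It settles continuity and monotonicity in one stroke.
\item Through the inverse function theorem, it actually proves that $c_2$ is $C^1$, which the paper only asserts.
\item It yields an explicit derivative, which could replace bracketing root-finding with a Newton-type step in Step~1 of Section~4.2.
\end{enumerate}

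What it costs: you need $f$ continuous and strictly positive on all of $\mathbb{R}$. You verify this correctly from the form in Theorem~1(i), including the cases $l=-\infty$ and $u=+\infty$. By contrast, the paper's monotonicity argument needs only that $F$ is continuous and strictly increasing. It would therefore apply verbatim to any such distribution, without regularity of the density.
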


\begin{proof}
	We first prove the continuity of $I(c_1)$, It is straightforward to show that $c_2(c_1)$ is continuous in $c_1$. In addition, it follows directly from its expression that $\hat{\Delta} f(\hat{\Delta} \mid l<\hat{\Delta}^{(1)}<u)$ is bounded. Let $M = \sup_{\hat{\Delta}} \hat{\Delta} f(\hat{\Delta} \mid l<\hat{\Delta}^{(1)}<u)$. Let $\delta_1 = c_2(c_1) - c_1$. Then, we have $\forall c_1': |c_1' - c_1|<\delta_1$ satisfies $c_1' < c_2(c_1)$. By the continuity of $c_2(c_1)$ , $\exists \delta_2 \;\text{s.t.} \; \forall c_1': |c_1' - c_1|<\delta_2$ satisfies $c_2(c_1') > c_1$. For any $\epsilon>0$, let $\delta_3 = \epsilon/(2M)$. By the continuity of $c_2(c_1)$, $\exists \delta_4 \;\text{s.t.} \; \forall c_1': |c_1' - c_1|<\delta_4$ has $|c_2(c_1') - c_2(c_1)| < \epsilon/(2M)$. Let $\delta = \min(\delta_1, \delta_2, \delta_3, \delta_4)$. For $\forall c_1': |c_1' - c_1| < \delta$, we have
	\begin{align*}
		& |\int_{c_1'}^{c_2(c_1')} \hat{\Delta} f(\hat{\Delta} \mid l<\hat{\Delta}^{(1)}<u) \mathrm{d}\hat{\Delta} - \int_{c_1}^{c_2(c_1)}  \hat{\Delta} f(\hat{\Delta} \mid l<\hat{\Delta}^{(1)}<u) \mathrm{d}\hat{\Delta}| \\ < &|\int_{\min(c_1,c_1' )}^{\max(c_1,c_1')}  \hat{\Delta} f(\hat{\Delta} \mid l<\hat{\Delta}^{(1)}<u) \mathrm{d}\hat{\Delta}| + |\int_{\min(c_2(c_1),c_2(c_1'))}^{\max(c_2(c_1),c_2(c_1' ))}  \hat{\Delta} f(\hat{\Delta} \mid l<\hat{\Delta}^{(1)}<u) \mathrm{d}\hat{\Delta} dt| \\ < & |c_1' - c_1|M + |c_2(c_1') - c_2(c_1)|M \\ < & \epsilon
	\end{align*}
	Then, we prove the monotonicity. Suppose $c_1 < c_1'$. If $c_1' \geq c_2(c_1)$, we have
	\begin{equation*}
		\int_{c_1'}^{c_2(c_1')} \hat{\Delta} f(\hat{\Delta} \mid l<\hat{\Delta}^{(1)}<u) \mathrm{d}\hat{\Delta} > c_1' (1 - \alpha) \geq c_2(c_1)(1 - \alpha) > \int_{c_1}^{c_2(c_1)} \hat{\Delta} f(\hat{\Delta} \mid l<\hat{\Delta}^{(1)}<u) \mathrm{d}\hat{\Delta}.
	\end{equation*}
	Otherwise,
	\begin{align*}
		& \int_{c_1'}^{c_2(c_1')} \hat{\Delta} f(\hat{\Delta} \mid l<\hat{\Delta}^{(1)}<u) \mathrm{d}\hat{\Delta} - \int_{c_1}^{c_2(c_1)} \hat{\Delta} f(\hat{\Delta} \mid l<\hat{\Delta}^{(1)}<u) \mathrm{d}\hat{\Delta} \\
		= & \int_{c_2(c_1)}^{c_2(c_1')} \hat{\Delta} f(\hat{\Delta} \mid l<\hat{\Delta}^{(1)}<u) \mathrm{d}\hat{\Delta} - \int_{c_1}^{c_1'} \hat{\Delta} f(\hat{\Delta} \mid l<\hat{\Delta}^{(1)}<u) \mathrm{d}\hat{\Delta} \\
		> & c_2(c_1) \int_{c_2(c_1)}^{c_2(c_1')} f(\hat{\Delta} \mid l<\hat{\Delta}^{(1)}<u) \mathrm{d}\hat{\Delta}  - c_1' \int_{c_1}^{c_1'} f(\hat{\Delta} \mid l<\hat{\Delta}^{(1)}<u) \mathrm{d}\hat{\Delta}.
	\end{align*}
	Since 
	\begin{equation*}
		\int_{c_2(c_1)}^{c_2(c_1')} f(\hat{\Delta} \mid l<\hat{\Delta}^{(1)}<u) \mathrm{d}\hat{\Delta}  = \int_{c_1}^{c_1'} f(\hat{\Delta} \mid l<\hat{\Delta}^{(1)}<u) \mathrm{d}\hat{\Delta},
	\end{equation*}
	the above result is greater than 0.
\end{proof}

\subsubsection*{A.5. Proposition 2} \label{A2}
\setcounter{proposition}{1}

\begin{proposition} \label{proposition:1}
	$C_1(\Delta_0)$ and $C_2(\Delta_0)$ are continuous and strictly increasing in $\Delta_0$.
\end{proposition}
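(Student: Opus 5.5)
The plan is to treat $(C_1(\Delta_0),C_2(\Delta_0))$ as the solution of a smooth system of two equations. Continuity and differentiability will come from the implicit function theorem, and strict monotonicity from the sign of the implicit derivatives.

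\textbf{Setup.} By Theorem 1(i), the conditional density has the exponential-family form
\[
f_\theta(t)=\exp\{\theta t-A(\theta)\}h(t), \qquad \theta=\Delta_0/\sigma_{(12)}^2,
\]
where $h(t)=\sigma_{(12)}^{-1}\phi(t/\sigma_{(12)})\{\Phi(\cdot)-\Phi(\cdot)\}$ does not depend on $\Delta_0$, and $A$ is analytic. Hence $m(\theta):=\mathbb{E}_\theta(\hat\Delta)=A'(\theta)$ and $\mathrm{Var}_\theta(\hat\Delta)=A''(\theta)$. Since $\theta$ is a strictly increasing linear function of $\Delta_0$, it suffices to work in $\theta$. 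The defining conditions of Theorem 1(ii) read
\[
G_1:=\int_{c_1}^{c_2} f_\theta(t)\,dt-(1-\alpha)=0,\qquad
G_2:=\int_{c_1}^{c_2}(t-m(\theta))f_\theta(t)\,dt=0 .
\]
Existence and uniqueness of the solution $(c_1,c_2)$ for each $\theta$ follow from Lemma 1. There, $c_2=c_2(c_1)$ is forced by $G_1$, and $I(c_1)$ is continuous and strictly increasing. One also checks the limits: $I\to(1-\alpha)\times$(mean of the upper tail portion) as $F(c_1)\downarrow 0$ and $\to$ the corresponding lower value as $F(c_1)\uparrow\alpha$, so these limits bracket $(1-\alpha)m$.

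\textbf{Continuity.} The map $(c_1,c_2,\theta)\mapsto(G_1,G_2)$ is $C^1$, because $f_\theta(t)$ is smooth in $(t,\theta)$ and positive. Its Jacobian in $(c_1,c_2)$ is
\[
\begin{pmatrix}-f(c_1) & f(c_2)\\ -(c_1-m)f(c_1) & (c_2-m)f(c_2)\end{pmatrix},
\]
with determinant $f(c_1)f(c_2)(c_1-c_2)\neq 0$. The implicit function theorem, combined with uniqueness, then gives that $C_1,C_2$ are $C^1$, hence continuous, in $\theta$.

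\textbf{Monotonicity.} Differentiate $G_1$ in $\theta$. Using $\partial_\theta f_\theta=(t-m)f_\theta$ and $G_2=0$, this gives $f(c_2)c_2'-f(c_1)c_1'=0$. Write $k:=f(c_1)c_1'=f(c_2)c_2'$.

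Next differentiate $\int_{c_1}^{c_2}tf_\theta\,dt=(1-\alpha)m(\theta)$. Using $\int_{c_1}^{c_2}t(t-m)f=\int_{c_1}^{c_2}(t-m)^2f$ (by $G_2=0$), this gives
\[
k(c_2-c_1)=\int (t-m)^2\{(1-\alpha)-\mathbf 1_{[c_1,c_2]}(t)\}f_\theta(t)\,dt .
\]
The main obstacle is showing that the right-hand side is strictly positive. For this I will use a quadratic trick. Choose constants $a,b$ so that $g(t)=(t-m)^2-b(t-m)-a$ vanishes at $c_1$ and $c_2$. Then $g$ is a convex quadratic that is negative on $(c_1,c_2)$ and positive outside. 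The two constraints say that $w(t):=(1-\alpha)-\mathbf 1_{[c_1,c_2]}(t)$ is orthogonal (under $f_\theta$) to $1$ and to $t-m$. Therefore the right-hand side equals $\int g\,w\,f_\theta$. Now $g\,w>0$ almost everywhere: outside the interval $g>0$ and $w=1-\alpha>0$, while inside $g<0$ and $w=-\alpha<0$. So the integral is strictly positive, giving $k>0$.

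Since $f>0$, it follows that $C_1'(\theta)=k/f(c_1)>0$ and $C_2'(\theta)=k/f(c_2)>0$. Both functions are therefore strictly increasing in $\theta$, and hence in $\Delta_0$.
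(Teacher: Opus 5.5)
Your proof is correct, and it takes a genuinely different route from the paper.

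The paper does not prove strict monotonicity itself. It imports it from Lehmann's Lemma 5.5.1, which holds for any one-parameter exponential family. It then proves continuity by contradiction:
\begin{itemize}
\item monotonicity guarantees one-sided limits $C_i^{\pm}(\Delta')$ at a putative jump;
\item dominated convergence shows that $(C_1^{+},C_2^{+})$ again satisfies both side conditions at $\Delta'$;
\item this contradicts uniqueness of the UMPU test (Lehmann, Problem 4.23).
\end{itemize}

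You instead work directly with the system $G_1=G_2=0$. The nonvanishing Jacobian $f(c_1)f(c_2)(c_1-c_2)$, together with uniqueness from Lemma 1, gives continuity without any appeal to monotonicity, and in fact gives continuous differentiability. Implicit differentiation then gives $C_i'(\theta)=k/f_\theta(C_i)$. Your quadratic $g(t)=(t-c_1)(t-c_2)$ is legitimate precisely because $w$ is $f_\theta$-orthogonal to $1$ and to $t-m$, and it yields $k>0$. This is an elementary, self-contained substitute for the cited Lemma 5.5.1.

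What your route buys is a stronger conclusion ($C^1$ with explicit positive derivatives). That could also be exploited in Step 2 of the numerical procedure, for example through a Newton iteration in place of bracketing. What it costs is reliance on features specific to this family: a density that is smooth and strictly positive on all of $\mathbb{R}$, and a natural parameter space equal to $\mathbb{R}$. The paper's softer argument leans on general exponential-family theory and needs no differentiability.

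One small slip: in your existence check the two limits of $I$ are swapped.
\begin{itemize}
\item As $F(c_1)\downarrow 0$, the interval tends to $(-\infty,F^{-1}(1-\alpha)]$, so $I$ tends to $(1-\alpha)$ times the mean of the lower portion, which is the smaller value.
\item As $F(c_1)\uparrow\alpha$, $I$ tends to $(1-\alpha)$ times the mean of the upper portion.
\end{itemize}
The bracketing of $(1-\alpha)m$ is unaffected, and existence is in any case already supplied by Theorem 1(ii).
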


\begin{proof}
	Lemma 5.5.1 in \citet{lehmann1986testing} gives that $C_1$ and $C_2$ are strictly increasing functions in $\Delta_0$. We prove the continuity by contradiction. Suppose any of $C_i$, say $C_1$, is not continuous at $\Delta_0 = \Delta'$. Since $C_1$ is strictly increasing, $C_1$ can only be the jump discontinuity at $\Delta'$, where $\lim_{\Delta_0 \to \Delta'^-}C_1(\Delta_0) = C_1^- (\Delta')$, $\lim_{\Delta_0 \to \Delta'^+}C_1(\Delta_0) = C_1^+ (\Delta')$, and $C_1^- (\Delta') < C_1^+ (\Delta')$. Suppose $[C_1(\Delta'), C_2(\Delta')]$ is the acceptance region of the uniformly most powerful unbiased test of $H: \Delta = \Delta'$. At least one of the quantities $C_1^- (\Delta')$ or $C_1^+ (\Delta')$ differs from $C_1 (\Delta')$, say $C_1^+ (\Delta') \neq  C_1(\Delta')$. Define sequence $\Delta_n = \Delta' + 1/n \ (n \in \mathbb{Z}^+)$. Suppose $[C_1(\Delta_n), C_2(\Delta_n)]$ are the corresponding acceptance region of the uniformly most powerful unbiased test of $H: \Delta = \Delta_n$. For any $\Delta_n$, by the definition of the uniformly most powerful unbiased test, we have
	\begin{align*}
		\int_{C_1(\Delta_n)}^{C_2(\Delta_n)} f_{\Delta_n}(\hat{\Delta}  \mid l<\hat{\Delta}^{(1)}<u)  \mathrm{d}\hat{\Delta} = 1 - \alpha \intertext{and} \int_{C_1(\Delta_n)}^{C_2(\Delta_n)} \hat{\Delta} f_{\Delta_n}(\hat{\Delta}  \mid l<\hat{\Delta}^{(1)}<u)  \mathrm{d}\hat{\Delta} = (1 - \alpha)\mathbb{E}_{\Delta_n}(\hat{\Delta}).
	\end{align*}
	We have $\lim_{n \to +\infty} C_1(\Delta_n) =  C_1^+ (\Delta')$. Let $\lim_{\Delta_0 \to \Delta'^+}C_2(\Delta_0) = C_2^+ (\Delta')$. Then, we have $\lim_{n \to +\infty} C_2(\Delta_n) = C_2^+ (\Delta')$. Let $n \to +\infty$ in the above two equations, by dominated convergence theorem (details are given in S.1.2*), we obtain
	\begin{align*}
		\int_{C_1^+(\Delta')}^{C_2^+(\Delta')} f_{\Delta'}(\hat{\Delta}  \mid l<\hat{\Delta}^{(1)}<u)  \mathrm{d}\hat{\Delta} = 1 - \alpha \intertext{and} \int_{C_1^+(\Delta')}^{C_2^+(\Delta')} \hat{\Delta} f_{\Delta'}(\hat{\Delta}  \mid l<\hat{\Delta}^{(1)}<u)  \mathrm{d}\hat{\Delta} = (1 - \alpha)\mathbb{E}_{\Delta'}(\hat{\Delta}).
	\end{align*}
	Hence, $[C_1^+(\Delta'), C_2^+(\Delta')]$, where $C_1^+(\Delta') \neq C_1(\Delta')$, also constitutes the uniformly most powerful unbiased test of $H: \Delta = \Delta'$, which is conflict with the uniqueness of the uniformly most powerful unbiased test (For uniqueness, see Problem 4.23 in \citet{lehmann1986testing}).
\end{proof}

\subsubsection*{A.5*. Detail of the application of dominated convergence theorem in A.5} \label{A2*}
\begin{proof}
	We aim to prove that
	\begin{equation*}
		\lim_{n \to +\infty} \int_{C_1(\Delta_n)}^{C_2(\Delta_n)} f_{\Delta_n}(\hat{\Delta}  \mid l<\hat{\Delta}^{(1)}<u) \mathrm{d}\hat{\Delta}  = \int_{C_1^+(\Delta')}^{C_2^+(\Delta')} f_{\Delta'}(\hat{\Delta}  \mid l<\hat{\Delta}^{(1)}<u) \mathrm{d}\hat{\Delta} 
	\end{equation*}
	and 
	\begin{equation*}
		\lim_{n \to +\infty} \int_{C_1(\Delta_n)}^{C_2(\Delta_n)} \hat{\Delta} f_{\Delta_n}(\hat{\Delta}  \mid l<\hat{\Delta}^{(1)}<u) \mathrm{d}\hat{\Delta}  =  \int_{C_1^+(\Delta')}^{C_2^+(\Delta')} \hat{\Delta} f_{\Delta'}(\hat{\Delta}  \mid l<\hat{\Delta}^{(1)}<u) \mathrm{d}\hat{\Delta} .
	\end{equation*}
	Let $M = \sup_{\Delta \in [\Delta', \Delta'+1], \hat{\Delta} \in [C_1^+(\Delta'), C_2^+(\Delta')]} f_{\Delta}(\hat{\Delta}  \mid l<\hat{\Delta}^{(1)}<u)$ and $g \equiv M$. Then, in region $[C_1^+(\Delta'), C_2^+(\Delta')]$, $f_{\Delta_n}$ is dominated by $g$. $g$ is integrable on $[C_1^+(\Delta'), C_2^+(\Delta')]$ since 
	\begin{equation*}
		\int_{C_1^+(\Delta')}^{C_2^+(\Delta')} g = (C_2^+(\Delta') - C_1^+(\Delta'))M < +\infty.
	\end{equation*}
	Then, by dominated convergence theorem, we have
	\begin{equation*}
		\lim_{n \to +\infty} \int_{C_1^+(\Delta')}^{C_2^+(\Delta')} f_{\Delta_n}(\hat{\Delta}  \mid l<\hat{\Delta}^{(1)}<u) \mathrm{d}\hat{\Delta}  = \int_{C_1^+(\Delta')}^{C_2^+(\Delta')} f_{\Delta'}(\hat{\Delta}  \mid l<\hat{\Delta}^{(1)}<u) \mathrm{d}\hat{\Delta}.
	\end{equation*}
	Since $\lim_{n \to +\infty} C_1(\Delta_n) =  C_1^+ (\Delta')$ and $\lim_{n \to +\infty} C_2(\Delta_n) = C_2^+ (\Delta')$, we have
	\begin{align*}
		& \lim_{n \to +\infty} \big( \int_{C_1(\Delta_n)}^{C_2(\Delta_n)} f_{\Delta_n}(\hat{\Delta}  \mid l<\hat{\Delta}^{(1)}<u) \mathrm{d}\hat{\Delta} - \int_{C_1^+(\Delta')}^{C_2^+(\Delta')} f_{\Delta_n}(\hat{\Delta}  \mid l<\hat{\Delta}^{(1)}<u) \mathrm{d}\hat{\Delta} \big) \\
		= & \lim_{n \to +\infty} \big( \int_{C_2(\Delta')}^{C_2(\Delta_n)} f_{\Delta_n}(\hat{\Delta}  \mid l<\hat{\Delta}^{(1)}<u) \mathrm{d}\hat{\Delta} - \int_{C_1(\Delta')}^{C_1(\Delta_n)} f_{\Delta_n}(\hat{\Delta}  \mid l<\hat{\Delta}^{(1)}<u) \mathrm{d}\hat{\Delta} \big) \\
		= & 0.
	\end{align*}
	The first equation is proved. To prove the second equation, a completely analogous method can be used.
\end{proof}

\subsection*{B: Application of the C-UMAU confidence interval procedure in considered designs}
\subsubsection*{B.1. Application in the design with interim decision $D_1$ as described in Section 7.1} \label{A6}
For the design descried in Section 7.1, if the trial continues with the full population, i.e., $D_1(\mathcal{X}^{(1)}) = \{1,2\}$ we have
\begin{equation*}
	\{\mathcal{X}^{(1)}: D_1(\mathcal{X}^{(1)}) = \{1,2\}\} =\{\mathcal{X}^{(1)}: Z_{\mathcal{K}}^{(1)}> Z_* \} = \{\mathcal{X}^{(1)}: \frac{2\sigma}{\sqrt{n_1}} Z_* < \hat{\Delta}_{\mathcal{K}}^{(1)} < +\infty \}.
\end{equation*}
If the trial enriches to subpopulation 1, we have
\begin{align*}
	& \{\mathcal{X}^{(1)}: D_1(\mathcal{X}^{(1)}) = \{1\}\}  \\ = & \{\mathcal{X}^{(1)}: Z_{\mathcal{K}}^{(1)} \leq Z_*, Z_{1}^{(1)} \geq Z_{2}^{(1)} \} \\  = & \{\mathcal{X}^{(1)}: \sqrt{\frac{p_2}{p_1}} \hat{\Delta}_{2}^{(1)} \leq \hat{\Delta}_{1}^{(1)} \leq \frac{2\sigma}{p_1 \sqrt{n_1}}Z_* - \frac{p_2}{p_1}\hat{\Delta}_{2}^{(1)} \}.
\end{align*}
Similarly, if the trial enriches to subpopulation 2, we have
\begin{align*}
	& \{\mathcal{X}^{(1)}: D_1(\mathcal{X}^{(1)}) = \{2\}\} \\ = & \{\mathcal{X}^{(1)}: Z_{\mathcal{K}}^{(1)} \leq Z_*, Z_{1}^{(1)} < Z_{2}^{(1)} \}  \\ = & \{\mathcal{X}^{(1)}: \sqrt{\frac{p_1}{p_2}} \hat{\Delta}_{1}^{(1)} < \hat{\Delta}_{2}^{(1)} \leq \frac{2\sigma}{p_2 \sqrt{n_1}}Z_* - \frac{p_1}{p_2}\hat{\Delta}_{1}^{(1)} \}.
\end{align*}
According to Section 3.3, $D_1 \in \mathcal{D}$.

Then, we illustrate the calculation of the C-UMAU confidence intervals. If the trial continues with the full population, by Theorem 1(i), the conditional distribution of $\hat{\Delta}_{\mathcal{K}}$ given $D_1(\mathcal{X}^{(1)}) = \{1,2\}$ is
\begin{align*}
	& f(\hat{\Delta}_{\mathcal{K}} \mid D_1(\mathcal{X}^{(1)}) = \{1,2\}) \\
	= & f(\hat{\Delta}_{\mathcal{K}} \mid ({2\sigma}/{\sqrt{n_1}}) Z_*<\hat{\Delta}^{(1)}_{\mathcal{K}} < +\infty) \\
	= & \frac{1}{2\sigma/\sqrt{n}} \phi(\frac{\hat{\Delta}_{\mathcal{K}} - \Delta_{\mathcal{K}}}{2\sigma/\sqrt{n}}) \frac{1 - \Phi(\frac{({2\sigma}/{\sqrt{n_1}}) Z_* - \hat{\Delta}_{\mathcal{K}} }{\sqrt{n_2/n_1} \ 2\sigma/\sqrt{n}})}{1 - \Phi(\frac{({2\sigma}/{\sqrt{n_1}}) Z_* - \Delta_{\mathcal{K}}}{2\sigma/\sqrt{n_1}})}.
\end{align*}
Then, for any $\Delta_{\mathcal{K}, 0} \in \mathbb{R}$, we can construct the conditional level $\alpha$ uniformly most powerful unbiased test for the null hypothesis $H: \Delta_{\mathcal{K}} = \Delta_{\mathcal{K}, 0}$ using Step 1 in Section 4.2. Then, we can calculate the $100(1 - \alpha)\%$ C-UMAU confidence interval for $\Delta_{\mathcal{K}}$ using Step 2 in Section 4.2. If the trial enriches to subpopulation 1, by Theorem 1(i), the conditional distribution of $\hat{\Delta}_{1}$ given $D_1(\mathcal{X}^{(1)}) = \{1\}$ and the realization of $\hat{\Delta}_{2}^{(1)}$ is
\begin{align*}
	& f(\hat{\Delta}_{1} \mid D_1(\mathcal{X}^{(1)}) = \{1\}, \hat{\Delta}_{2}^{(1)}) \\
	= & f(\hat{\Delta}_{1} \mid \sqrt{\frac{p_2}{p_1}} \hat{\Delta}_{2}^{(1)} < \hat{\Delta}_{1}^{(1)} \leq \frac{2\sigma}{p_1 \sqrt{n_1}}Z_* - \frac{p_2}{p_1}\hat{\Delta}_{2}^{(1)} ) \\
	= & \frac{1}{2\sigma/\sqrt{p_1n_1 + n_2}} \phi(\frac{\hat{\Delta}_{1} - \Delta_{1}}{2\sigma/\sqrt{p_1n_1 + n_2}}) \frac{\Phi(\frac{\frac{2\sigma}{p_1 \sqrt{n_1}}Z_* - \frac{p_2}{p_1}\hat{\Delta}_{2}^{(1)} - \hat{\Delta}_{1} }{\sqrt{n_2/(p_1 n_1)} \ 2\sigma/\sqrt{p_1n_1 + n_2}}) - \Phi(\frac{\sqrt{\frac{p_2}{p_1}} \hat{\Delta}_{2}^{(1)} - \hat{\Delta}_{1} }{\sqrt{n_2/(p_1 n_1)} \ 2\sigma/\sqrt{p_1n_1 + n_2}})}{\Phi(\frac{\frac{2\sigma}{p_1 \sqrt{n_1}}Z_* - \frac{p_2}{p_1}\hat{\Delta}_{2}^{(1)} - \Delta_{1}}{2\sigma/\sqrt{p_1n_1}}) - \Phi(\frac{\sqrt{\frac{p_2}{p_1}} \hat{\Delta}_{2}^{(1)} - \Delta_{1}}{2\sigma/\sqrt{p_1n_1}})}.
\end{align*}
Then, using Step 1 in Section 4.2, for any $\Delta_{1, 0} \in \mathbb{R}$, we can construct the conditional level $\alpha$ uniformly most powerful unbiased test for the null hypothesis $H: \Delta_{1} = \Delta_{1, 0}$. Subsequently, we can calculate the $100(1 - \alpha)\%$ C-UMAU confidence interval for $\Delta_{1}$  using Step 2 in Section 4.2 given $D_2(\mathcal{X}^{(1)}) = \{1\}$ and the observed $ \hat{\Delta}_{2}^{(1)}$. Since this procedure provides the C-UMAU confidence intervals for every realization of $ \hat{\Delta}_{2}^{(1)}$, it also constitutes the C-UMAU confidence interval for $\Delta_1$ given $D_1(\mathcal{X}^{(1)}) = \{1\}$. If the trial enriches to subpopulation $\mathcal{S}_2$, a completely analogous procedure can be applied to obtain the confidence interval for $\Delta_2$.

\subsubsection*{B.2. Application in the design with interim decision $D_2$ as described in Section 6 and Section 7.2} \label{A7}
If the trial continues with the full population $\mathcal{F}$, i.e., $D_2(\mathcal{X}^{(1)}) = \{1,2\}$ , the stage 1 sample mean difference in $\mathcal{F}$ satisfies $\Delta_* < \hat{\Delta}^{(1)}_{\mathcal{K}} < +\infty$. Hence, we have
\begin{equation*}
	\{\mathcal{X}^{(1)}: D_2(\mathcal{X}^{(1)}) = \{1,2\}\} = \{\mathcal{X}^{(1)}: \Delta_* < \hat{\Delta}^{(1)}_{\mathcal{K}} < +\infty\}.
\end{equation*}
When the trial enriches to subpopulation $\mathcal{S}_1$, i.e., $D_2(\mathcal{X}^{(1)}) = \{1\}$, the stage 1 sample mean differences satisfy $-\infty < \hat{\Delta}^{(1)}_{\mathcal{K}} \leq \Delta_*$ and $\max(\hat{\Delta}_{1}^{(1)}, \hat{\Delta}_{2}^{(1)}) = \hat{\Delta}_{1}^{(1)} > \Delta_*$. Using $\hat{\Delta}^{(1)}_{\mathcal{K}} = p_1\hat{\Delta}_{1}^{(1)} + p_2 \hat{\Delta}_{2}^{(1)}$, these constraints can be equivalently written as $\Delta_*<\hat{\Delta}_{1}^{(1)} \leq (\Delta_* - p_2 \hat{\Delta}_{2}^{(1)})/p_1$. Hence, we have 
\begin{equation*}
	\{\mathcal{X}^{(1)}: D_2(\mathcal{X}^{(1)}) = \{1\}\} = \{\mathcal{X}^{(1)}: \Delta_*<\hat{\Delta}_{1}^{(1)} \leq (\Delta_* - p_2 \hat{\Delta}_{2}^{(1)})/p_1\}.
\end{equation*}
Similarly, if the trial enriches to $\mathcal{S}_2$, we can obtain that
\begin{equation*}
	\{\mathcal{X}^{(1)}: D_2(\mathcal{X}^{(1)}) = \{2\}\} =  \{\mathcal{X}^{(1)}: \Delta_*<\hat{\Delta}_{2}^{(1)} \leq (\Delta_* - p_1 \hat{\Delta}_{1}^{(1)})/p_2\}.
\end{equation*}
Hence, $D_2 \in \mathcal{D}$.

Then, we illustrate the calculation of the C-UMAU confidence intervals. If the trial continues with the full population, condition on $D(\mathcal{X}^{(1)}) = \{1,2\}$, we can obtain the probability density of $\hat{\Delta}_{\mathcal{K}}$ by Theorem 1(i), which is given by
\begin{align*}
	& f(\hat{\Delta}_{\mathcal{K}} \mid D(\mathcal{X}^{(1)}) = \{1,2\}) \\
	= & f(\hat{\Delta}_{\mathcal{K}} \mid \Delta_*<\hat{\Delta}^{(1)}_{\mathcal{K}} < +\infty) \\
	= & \frac{1}{2\sigma/\sqrt{n}} \phi(\frac{\hat{\Delta}_{\mathcal{K}} - \Delta_{\mathcal{K}}}{2\sigma/\sqrt{n}}) \frac{1 - \Phi(\frac{\Delta_* - \hat{\Delta}_{\mathcal{K}} }{\sqrt{n_2/n_1} \ 2\sigma/\sqrt{n}})}{1 - \Phi(\frac{\Delta_* - \Delta_{\mathcal{K}}}{2\sigma/\sqrt{n_1}})}.
\end{align*}
Accordingly, for any $\Delta_{\mathcal{K}, 0} \in \mathbb{R}$, we can construct the conditional level $\alpha$ uniformly most powerful unbiased test for the null hypothesis $H: \Delta_{\mathcal{K}} = \Delta_{\mathcal{K}, 0}$ using Step 1 in Section 4.2. Then, we can calculate the $100(1 - \alpha)\%$ C-UMAU confidence interval for $\Delta_{\mathcal{K}}$ using Step 2 in Section 4.2. When the trial enriches to subpopulation $\mathcal{S}_1$, conditioning on both $D_2(\mathcal{X}^{(1)}) = \{1\}$ and the realization of $ \hat{\Delta}_{2}^{(1)}$, the density of $\hat{\Delta}_{1}$ follows Theorem 1(i), given by
\begin{align*}
	& f(\hat{\Delta}_{1} \mid D(\mathcal{X}^{(1)}) = \{1\}, \hat{\Delta}_{2}^{(1)} ) \\
	= & f(\hat{\Delta}_{1} \mid \Delta_*<\hat{\Delta}_{1}^{(1)} \leq (\Delta_* - p_2 \hat{\Delta}_{2}^{(1)})/p_1) \\
	= & \frac{1}{2\sigma/\sqrt{p_1n_1 + n_2}} \phi(\frac{\hat{\Delta}_{1} - \Delta_{1}}{2\sigma/\sqrt{p_1n_1 + n_2}}) \frac{1 - \Phi(\frac{\Delta_* - \hat{\Delta}_{1} }{\sqrt{n_2/(p_1 n_1)} \ 2\sigma/\sqrt{p_1n_1 + n_2}})}{1 - \Phi(\frac{\Delta_* - \Delta_{1}}{2\sigma/\sqrt{p_1n_1}})}.
\end{align*}
Then, using Step 1 in Section 4.2, for any $\Delta_{1, 0} \in \mathbb{R}$, we can construct the conditional level $\alpha$ uniformly most powerful unbiased test for the null hypothesis $H: \Delta_{1} = \Delta_{1, 0}$. Subsequently, we can calculate the $100(1 - \alpha)\%$ C-UMAU confidence interval for $\Delta_{1}$  using Step 2 in Section 4.2 given $D_2(\mathcal{X}^{(1)}) = \{1\}$ and the observed $ \hat{\Delta}_{2}^{(1)}$. Since this procedure provides the $100(1 - \alpha)\%$ C-UMAU confidence intervals for every realization of $ \hat{\Delta}_{2}^{(1)}$, it also constitutes $100(1 - \alpha)\%$ C-UMAU confidence interval for $\Delta_1$ given $D_2(\mathcal{X}^{(1)}) = \{1\}$. If the trial enriches to subpopulation $\mathcal{S}_2$, a completely analogous procedure can be applied to obtain the confidence interval for $\Delta_2$.

If the trial continues with the full population and the co-primary analysis is desired, the stage 1 sample mean difference in $\mathcal{S}_1$ satisfies $(\Delta_* - p_2 \hat{\Delta}_{2}^{(1)})/p_1 < \hat{\Delta}^{(1)}_{1} \leq +\infty$, i.e., $\{\mathcal{X}^{(1)}: D_2(\mathcal{X}^{(1)}) = \{1,2\}\} = \{\mathcal{X}^{(1)}: (\Delta_* - p_2 \hat{\Delta}_{2}^{(1)})/p_1 < \hat{\Delta}^{(1)}_{1} \leq +\infty \}$. By conditioning on both $D_2(\mathcal{X}^{(1)}) = \{1,2\}$ and the realization of $ \hat{\Delta}_{2}^{(1)}$, an analogous method to that described above can be applied to obtain the confidence interval for $\Delta_1$ given $D(\mathcal{X}^{(1)}) = \{1,2\}$. Similarly, in this case, the stage 1 sample mean difference in $\mathcal{S}_2$ satisfies $(\Delta_* - p_1 \hat{\Delta}_{1}^{(1)})/p_2 < \hat{\Delta}^{(1)}_{2} \leq +\infty$, i.e., $\{\mathcal{X}^{(1)}: D_2(\mathcal{X}^{(1)}) = \{1,2\}\} = \{\mathcal{X}^{(1)}: (\Delta_* - p_2 \hat{\Delta}_{2}^{(1)})/p_1 < \hat{\Delta}^{(1)}_{1} \leq +\infty \}$. By conditioning on both $D_2(\mathcal{X}^{(1)}) = \{1,2\}$ and the realization of $ \hat{\Delta}_{1}^{(1)}$, we can obtain the confidence interval for $\Delta_2$ given $D(\mathcal{X}^{(1)}) = \{1,2\}$.

\subsection*{C. Additional adaptive enrichment designs for the proposed confidence intervals} \label{A8}

To illustrate the generality of the proposed confidence interval method, we consider two trial designs previously proposed in the literature \citep{kimani2015estimation, kimani2018point}. We retain the notation introduced in Section 3. The first example is the design of \citet{kimani2015estimation}. They considered the patient population can be partitioned into two subpopulations, i.e., $\mathcal{K} = \{1,2\}$. Further, prior knowledge suggests subpopulation 1 is more likely to benefit from the treatment than subpopulation 2. \cref{fig:decision3} shows the interim decision rule in \citet{kimani2015estimation}. If $ \hat{\Delta}^{(1)}_{1} -  \hat{\Delta}^{(1)}_{\mathcal{K}} > \Delta_*$, the enrollment criteria will be restricted to subpopulation 1 in stage 2. In this case, by replacing $\hat{\Delta}^{(1)}_{\mathcal{K}}$ with $p_1\hat{\Delta}^{(1)}_{1}  + p_2\hat{\Delta}^{(1)}_{\{2\}} $, we can obtain $\{\mathcal{X}^{(1)}: D(\mathcal{X}^{(1)}) = \{1\}\} = \{\mathcal{X}^{(1)}: \hat{\Delta}^{(1)}_{1}  -  \hat{\Delta}^{(1)}_{\mathcal{K}} > \Delta_*\} = \{\mathcal{X}^{(1)}: \hat{\Delta}^{(1)}_{\{2\}}  +  \Delta_*/p_2 <\hat{\Delta}^{(1)}_{1} < +\infty\}$. Then,  the C-UMAU confidence interval for $\Delta_1$ can be calculated using similar procedure as in Section B.1 and B.2. When the trial continues with the full population $\mathcal{F}$, i.e., $ \hat{\Delta}^{(1)}_{1} -  \hat{\Delta}^{(1)}_{\mathcal{K}} \leq \Delta_*$. Note that $\hat{\Delta}^{(1)}_{{1} } -  \hat{\Delta}^{(1)}_{\mathcal{K}}$ is independent of $\hat{\Delta}^{(1)}_{\mathcal{K}}$. Hence, given $D(\mathcal{X}^{(1)} ) =  \mathcal{K}$, the conditional distribution of $\hat{\Delta}^{(1)}_{\mathcal{K}}$ is not altered by the constraints on $\hat{\Delta}^{(1)}_{{1} } -  \hat{\Delta}^{(1)}_{\mathcal{K}}$, and $\hat{\Delta}^{(1)}_{{1} } -  \hat{\Delta}^{(1)}_{\mathcal{K}}$ is an auxiliary statistic for estimating $\Delta_{\mathcal{F}}$. Then, using the fact that $f(\hat{\Delta}_{\mathcal{K}} \mid D(\mathcal{X}^{(1)} ) =  \mathcal{K}) = f(\hat{\Delta}_{\mathcal{K}} \mid \Delta_* < \hat{\Delta}^{(1)}_{1}  -  \hat{\Delta}^{(1)}_{\mathcal{K}}<+\infty) =  f(\hat{\Delta}_{\mathcal{K}})$, Then,  the C-UMAU confidence interval for $\Delta_{\mathcal{K}}$ can be calculated. Actually, in this case, C-UMAU confidence interval and the naive confidence interval for $\Delta_{\mathcal{K}}$ are the same.

We can formally include the auxiliary statistic in class $\mathcal{D^*}$ such that
\begin{align*}
	\{\mathcal{X}^{(1)}: D^*(\mathcal{X}^{(1)}) = \mathcal{K}^{(2)} \} = \{\mathcal{X}^{(1)}: L_{\mathcal{K}^{(2)}} < \hat{\Delta}^{(1)}_{\mathcal{K}^{(2)}}< U_{\mathcal{K}^{(2)}},\ L_{\mathcal{K}^{(2)}}' < V_{\mathcal{K}^{(2)}} < U_{\mathcal{K}^{(2)}}'\},
\end{align*}
where $L_{\mathcal{K}^{(2)}}$, $U_{\mathcal{K}^{(2)}}$, $L'_{\mathcal{K}^{(2)}}$, and $U'_{\mathcal{K}^{(2)}}$ are random variables that may also take values in $\overline{\mathbb{R}} = \mathbb{R} \cup \{-\infty,+\infty\}$. In addition, $L_{\mathcal{K}^{(2)}}$, $U_{\mathcal{K}^{(2)}}$, and $V_{\mathcal{K}^{(2)}}$ are assumed to be independent of $\hat{\Delta}^{(1)}_{\mathcal{K}^{(2)}}$. Then, we have
\begin{align*}
	& f(\hat{\Delta}_{ \mathcal{K}^{(2)} } \mid D^*(\mathcal{X}^{(1)}) =  \mathcal{K}^{(2)}, L_{\mathcal{K}^{(2)}} = l_{\mathcal{K}^{(2)}}, U_{\mathcal{K}^{(2)}} = u_{\mathcal{K}^{(2)}}) \\
	& f(\hat{\Delta}_{\mathcal{K}^{(2)}} \mid l_{\mathcal{K}^{(2)}} < \hat{\Delta}^{(1)}_{\mathcal{K}^{(2)}}< u_{\mathcal{K}^{(2)}},\ L_{\mathcal{K}^{(2)}}' < V_{\mathcal{K}^{(2)}} < U_{\mathcal{K}^{(2)}}') \\
	= & f(\hat{\Delta}_{\mathcal{K}^{(2)}} \mid l_{\mathcal{K}^{(2)}} < \hat{\Delta}^{(1)}_{\mathcal{K}^{(2)}}< u_{\mathcal{K}^{(2)}}) \\
	= & \frac{\phi(\frac{\hat{\Delta}_{\mathcal{K}^{(2)}} - \Delta_{\mathcal{K}^{(2)}}}{2\sigma/\sqrt{p_{\mathcal{K}^{(2)}}n_1 + n_2}})}{2\sigma/\sqrt{p_{\mathcal{K}^{(2)}}n_1 + n_2}} \frac{\Phi(\frac{l_{\mathcal{K}^{(2)}} - \hat{\Delta}_{\mathcal{K}^{(2)}} }{\sqrt{n_2/(p_{\mathcal{K}^{(2)}}n_1)} \ 2\sigma/\sqrt{n_{\mathcal{K}^{(2)}}}}) - \Phi(\frac{u_{\mathcal{K}^{(2)}} - \hat{\Delta}_{\mathcal{K}^{(2)}} }{\sqrt{n_2/(p_{\mathcal{K}^{(2)}}n_1)} \ 2\sigma/\sqrt{n_{\mathcal{K}^{(2)}}}})}{\Phi(\frac{l_{\mathcal{K}^{(2)}} - \Delta_{\mathcal{K}^{(2)}}}{2\sigma/\sqrt{p_{\mathcal{K}^{(2)}}n_1}}) - \Phi(\frac{u_{\mathcal{K}^{(2)}} - \Delta_{\mathcal{K}^{(2)}}}{2\sigma/\sqrt{p_{\mathcal{K}^{(2)}}n_1}})},
\end{align*}
which has the same expression as in Section 4.3. Hence, our method can be applied with adaptive enrichment designs with interim decisions in the extended class $\mathcal{D^*}$.

\begin{figure}[]
	\begin{center}
		\begin{tikzpicture}[node distance=1.5cm and 0cm, auto]
			\node (start) [decision, text width=10em] {$\hat{\Delta}^{(1)}_{1} -  \hat{\Delta}^{(1)}_{\mathcal{K}} > \Delta_*$};
			\node (outcome1) [block, below left=2cm and 0cm of start] {Enrol from $\mathcal{S}_1$};
			\node (outcome2) [block, below right=2cm and 0cm of start] {Enrol from $\mathcal{F}$};
			\path [line] (start) -- node [midway, left] {Yes \hspace{0.5em}} (outcome1);
			\path [line] (start) -- node [midway, right] {\hspace{0.5em}  No} (outcome2);
		\end{tikzpicture}
	\end{center}
	\caption{Schematic diagram of the interim decision rule in \citet{kimani2015estimation}}
	\label{fig:decision3}
\end{figure}
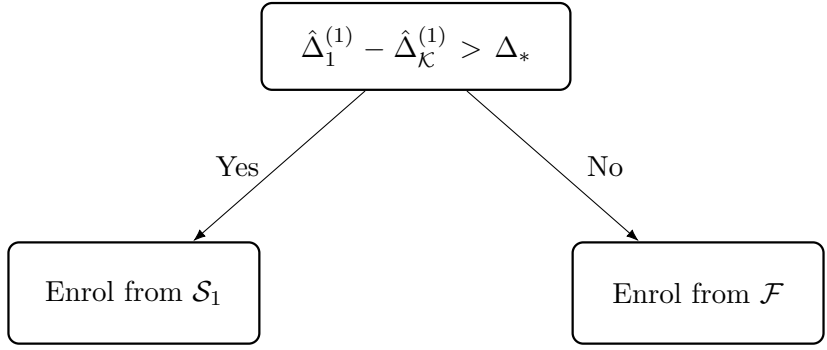

Another example is adaptive threshold  enrichment design proposed by \citet{kimani2018point}, where $k \ (k \geq 2)$ subpopulations are considered. They assume the true treatment effects in $\mathcal{S}_1$ to $\mathcal{S}_k$ is monotonically decreasing with $\Delta_{1} \geq \cdots \geq \Delta_{k}$. Define $[m] = \{1,\dots, m\} \ (m \leq k)$. In the interim analysis, the largest subgroup $\mathcal{S}_{[m]}$ whose stage 1 sample mean is greater than the threshold $\Delta_*$ is selected to continue in stage 2. Hence, we have 
\begin{equation*}
	\{\mathcal{X}^{(1)}: D(\mathcal{X}^{(1)}) = [m]\} =\{ \mathcal{X}^{(1)}: \hat{\Delta}_{[m]}^{(1)} > \Delta_*, \hat{\Delta}_{[m']}^{(1)} \leq \Delta_* \ (\text{for any }m' > m) \}
\end{equation*}
As shown in \citet{kimani2018point}, this selection event can also be expressed as 
\begin{equation*}
	\big\{ \mathcal{X}^{(1)}: \Delta_* <\Delta_{[m]}^{(1)} \leq \frac{p_{[m+1]}}{p_{[m]}}\Delta_* + \min\{ \frac{- p_{m+1} \hat{\Delta}_{m+1}^{(1)}}{p_{[m]}}, \frac{- \sum_{i = m+1}^{m+2} p_{i} \hat{\Delta}_{i}^{(1)}}{p_{[m]}}, \cdots, \frac{ - \sum_{i = m+1}^{k} p_{i} \hat{\Delta}_{i}^{(1)}}{p_{[m]}}   \} \big\}.
\end{equation*}
Accordingly, this design belongs to the class $\mathcal{D}$ where our method can be applied.


\clearpage

\renewcommand\thefigure{\arabic{figure}}    
\setcounter{figure}{0}


\end{document}